\documentclass[a4paper,onecolumn,11pt,unpublished]{quantumarticle}
\pdfoutput=1

\usepackage[utf8]{inputenc}
\usepackage[english]{babel}
\usepackage[T1]{fontenc}
\usepackage{amsmath,amssymb,amsthm,mathtools,bm}
\usepackage{microtype}
\usepackage{booktabs}
\usepackage{graphicx}
\usepackage{float}
\usepackage{xcolor}
\usepackage[numbers,sort&compress]{natbib}
\usepackage{hyperref}
\hypersetup{colorlinks=true,allcolors=quantumviolet}

\newtheorem{theorem}{Theorem}
\newtheorem{lemma}{Lemma}
\newtheorem{proposition}{Proposition}
\newtheorem{corollary}{Corollary}
\newtheorem{conjecture}{Conjecture}
\newcommand{\Tr}{\operatorname{Tr}}
\newcommand{\wh}[1]{\widehat{#1}}
\newcommand{\Z}{\mathbb Z}
\newcommand{\ket}[1]{\lvert #1\rangle}
\newcommand{\proj}[1]{\lvert #1\rangle\!\langle #1\rvert}
\newcommand{\supp}{\operatorname{supp}}
\newcommand{\PhiP}{\Phi_p}
\newcommand{\PhiRh}{\Phi_{R,h}}
\newcommand{\PhiReta}{\Phi_{R,\eta}}
\newcommand{\PhiRHeps}{\Phi_{R,H,\varepsilon}}
\newcommand{\CE}{C_{\mathrm E}}
\newcommand{\Psucc}{P_{\mathrm{succ}}}

\begin{document}

\title{Difference-Set Weyl Channels: Exact Capacity, Optimizer Bifurcation, and Scalable Entanglement Separation}
\author{Se-Wan Ji}
\email{sewanji@nsr.re.kr}
\affiliation{The Affiliated Institute of ETRI, Daejeon 34044, Republic of Korea}
\date{August 19, 2026}
\maketitle
\enlargethispage{2pt}

\begin{abstract}
In odd local dimension $D$, complete Wigner positivity gives a stochastic phase-space action on Wigner-nonnegative states, but does not control signed inputs, entanglement across channel uses, or collective decoding. We use a subsystem-resolved Weyl decomposition to expose all equality conditions of the known tensor-stable output-purity bound. Cyclic difference sets are exactly the uniform shift supports that saturate the universal Parseval lower bound on the worst nontrivial collision mode. For factorized shift--phase noise with a uniform shift support $R$ of cardinality $r$ and phase distribution $h$ having no larger collision radius, this yields $S_{\alpha,\min}(\Phi_{R,h}^{\otimes n})=n\log_2r$ for every $n\geq1$ and $0\leq\alpha\leq2$, the unrestricted capacity $C=\log_2(D/r)$, and a finite-blocklength strong converse. We then introduce a balanced bi-difference-set interpolation $h_\varepsilon=(1-\varepsilon)q_H+\varepsilon u_D$ for equal-size difference sets $R,H$, where $q_H$ and $u_D$ are the uniform distributions on $H$ and $\Z_D$, respectively. Its unassisted capacity is exactly constant on the full interval $0\leq\varepsilon\leq1$, although the Choi state is negative under partial transpose for every $\varepsilon<1$ and the endpoint $\varepsilon=1$ is entanglement breaking. At $\varepsilon=0$ we completely classify the collision degeneracy: every all-use minimum-output state is a product whose local factors lie in one of two mutually unbiased Weyl bases; for $\varepsilon>0$ only the computational basis remains. For Singer parameters $D=q^2+q+1$ and $r=q+1$, the entanglement-assisted advantage is scalable, with $C_{\mathrm E}/C\to2-\varepsilon$ at fixed $\varepsilon$. Finally, for the identity--dephasing profile $h_\eta$ we determine the exact tensor-power collision-entropy phase diagram on the entire noise interval and derive rigorous capacity bounds from computational- and Fourier-basis codes. Above the proved threshold we isolate a distinct von Neumann crossover and state, but do not assume, a tensor-power R\'enyi conjecture supported by diagnostic numerics. Thus complete Wigner positivity coexists with persistent channel entanglement, rigid classical unassisted coding, and a growing entanglement-assisted rate advantage.
\end{abstract}

\section{Introduction}

The classical capacity of a quantum channel is generally a regularized quantity,
\begin{equation}
 C(\Phi)=\lim_{n\to\infty}\frac1n\chi(\Phi^{\otimes n}),
 \label{eq:capacity}
\end{equation}
where $\chi$ is the Holevo information optimized over all input ensembles. Entangled codewords can make the regularization nontrivial~\cite{Holevo1998,Schumacher1997,Hastings2009}. Entanglement-breaking channels are additive~\cite{Shor2002}, but this does not address channels that retain entanglement with a reference. A more discriminating question is whether a non-entanglement-breaking channel can nevertheless have an exactly classical optimal code at every block length.

Odd-dimensional discrete Wigner functions provide a natural test. Random Weyl displacements act as Markov convolutions in phase space, and the resulting channels are completely positive-Wigner-preserving (CPWP)~\cite{Gross2006,Wang2019}. Yet unrestricted inputs may have signed Wigner functions and may be entangled across arbitrarily many channel uses. A positive phase-space kernel therefore does not reduce the quantum coding problem to a classical one: every pure state has the same phase-space $\ell_2$ norm, irrespective of Wigner negativity.

The relevant Weyl-covariant output-$2$-norm estimate and multiplicativity mechanism are known~\cite{FukudaHolevo2006,FukudaGour2017}, as is the covariance route from normalized-projector outputs to classical capacity~\cite{Wolf2005}. Additive minimum-output R\'enyi entropy and covariance also imply a strong converse by general methods~\cite{KonigWehner2009}. Non-entanglement-breaking channels with additive classical capacity are known, most prominently in the depolarizing setting~\cite{King2003}. The unresolved tasks addressed here are instead structural: identify the sparse anisotropic CPWP kernels that saturate the collision certificate, determine every tensor-power equality case including a degenerate endpoint, and construct a scalable family on which unassisted capacity is insensitive to an entanglement-breaking transition while entanglement assistance retains a macroscopic advantage. Unlike recent environment-measurement methods for quantum convolutional channels~\cite{Xiong2026}, we begin with a classical Wigner collision spectrum and flatten its nonconstant modes using cyclic design theory.

The central phenomenon is an optimizer and resource separation. On the balanced family, the unrestricted regularized capacity is constant and attained by a local basis protocol across the full interpolation, including its entanglement-breaking endpoint. Nevertheless, the Choi state remains NPT before that endpoint, the entanglement-assisted rate varies from twice the unassisted rate to the same rate, and the complete optimizer set bifurcates at the two-axis collision degeneracy. The strong converse is retained as a direct finite-block operational consequence rather than treated as the source of novelty.

\subsection{Contributions and assumptions}
\label{sec:contributions}

All entropies and capacities use base-two logarithms. The local dimension $D\geq3$ is odd, and $n$ always denotes a positive integer number of channel uses. For every nonempty $A\subseteq\Z_D$, let $\mathbf1_A$ be its indicator and let $q_A:=\mathbf1_A/|A|$ be the uniform distribution on $A$; write $u_D:=q_{\Z_D}$. Whenever the shift support is denoted by $R$, we set $r:=|R|$; in the balanced construction $|H|=r$ as well. Our main results are as follows.
\begin{enumerate}
 \item We give a subsystem-resolved derivation of the known tensor-stable collision bound. Its reduced-purity form makes all equality conditions explicit even for signed Wigner functions and inputs entangled across channel uses.
 \item For factorized noise with a uniform shift support $R\subset\Z_D$, we prove that cyclic difference sets are exactly the supports saturating the universal lower bound on the worst nontrivial shift collision mode and hence, whenever they exist, are global minimizers. When the phase collision radius is no larger, the bound is saturated and
 \begin{align}
  S_{\alpha,\min}(\Phi_{R,h}^{\otimes n})&=n\log_2r,\notag\\
  \chi(\Phi_{R,h}^{\otimes n})&=n\log_2(D/r),
  \qquad C(\Phi_{R,h})=\log_2(D/r)
  \label{eq:introexact}
 \end{align}
 for every $n$ and $0\leq\alpha\leq2$.
 \item The R\'enyi band yields the direct one-shot converse
 \begin{equation}
  \Psucc^{(n)}(M;\PhiRh)\leq
  \min\!\left\{1,\left[\frac{(D/r)^n}{M}\right]^{(\alpha-1)/\alpha}\right\},
  \qquad 1<\alpha\leq2,
  \label{eq:introstrong}
 \end{equation}
 for every $M$-message code, with arbitrary entangled codewords and collective decoding, where $\Psucc^{(n)}$ denotes the optimal average decoding success probability. Hence the capacity obeys a strong converse, with an explicit exponent at every finite block length.
 \item Under a strict spectral gap, every minimum-output R\'enyi state in the proved band and every Holevo-optimal block ensemble is a computational-basis product construction. At a balanced two-axis degeneracy we close the endpoint problem: every minimizer is a product of computational- or Fourier-basis states, with the basis chosen independently at each site, and all optimal ensembles are classified by their average state.
 \item Let $R$ and $H$ be equal-size cyclic difference sets and define $h_\varepsilon=(1-\varepsilon)q_H+\varepsilon u_D$. The unassisted capacity remains $\log_2(D/r)$ for the entire interval. The channel is entanglement breaking exactly at $\varepsilon=1$ and has an NPT Choi state for every $\varepsilon<1$, while
 \begin{equation}
  \CE(\PhiRHeps)=2\log_2D-\log_2r-H(h_\varepsilon).
  \label{eq:introEA}
 \end{equation}
 \item For every prime power $q$, a cyclic Singer difference set with
 \begin{equation}
  D=q^2+q+1,\qquad r=q+1
  \label{eq:introsinger}
 \end{equation}
 realizes all conclusions on the full interpolation. For fixed $0\leq\varepsilon<1$ the separation is asymptotically macroscopic:
 \begin{equation}
  \lim_{q\to\infty}\frac{\CE(\PhiRHeps)}{C(\PhiRHeps)}=2-\varepsilon.
  \label{eq:introratio}
 \end{equation}
 The dimensions are always odd and need not be prime powers themselves.
 \item For the identity--dephasing profile $h_\eta=\eta\delta_0+(1-\eta)u_D$, we determine $S_{2,\min}(\Phi_{R,\eta}^{\otimes n})$ and every minimizer on the full interval: computational-basis products below the collision threshold, Fourier-basis products above it, and arbitrary sitewise choices between the two bases at the threshold.
 \item On the same path, we prove an all-use capacity sandwich from the two basis codes and the collision converse.  We identify the unique von Neumann crossover $\eta_{\rm vN}>\eta_\star$ and formulate the exact tensor-power R\'enyi expression above $\eta_\star$ as a conjecture, explicitly separated from every proved capacity statement.
\end{enumerate}

\paragraph{Boundary of the novelty claim.}
The collision bound, its abstract multiplicativity mechanism, the normalized-projector implication for R\'enyi entropies, the covariance capacity formula, and the general R\'enyi route to a strong converse are prior ingredients~\cite{FukudaHolevo2006,FukudaGour2017,Wolf2005,KonigWehner2009}. The additional content is their design-theoretic minimax saturation, the subsystem-resolved all-use equality structure, the complete two-basis endpoint classification, a Fourier-resolved NPT criterion, and the balanced family with a constant unassisted capacity but scalable entanglement-assisted separation. The identities above establish tensor-power additivity for the fixed channels considered here; they do not assert strong additivity with an arbitrary second channel $\Psi$.

The assumptions and logical dependencies are summarized in Table~\ref{tab:scope}. The restriction $\alpha\leq2$ is essential to the present collision method; no minimum-output tensor-power identity or output majorization is claimed above order two. The converse characterizes saturation within uniform-shift product models, not every Weyl channel whose classical capacity single-letterizes.

\begin{table}[t]
\centering
\renewcommand{\arraystretch}{1.16}
\begin{tabular}{@{}p{0.30\textwidth}p{0.64\textwidth}@{}}
\toprule
\textbf{Assumptions} & \textbf{Conclusion} \\
\midrule
Arbitrary odd-dimensional random-Weyl noise & Subsystem-resolved form and equality conditions of the known tensor-stable output-purity bound. \\
Uniform $r$-point shift support & Difference sets are exactly the supports saturating the universal Parseval lower bound on the worst nonconstant shift collision mode; whenever they exist, they are global minimizers. \\
Difference-set support and $\beta_h\leq\alpha_R$ & Exact R\'enyi minima for $0\leq\alpha\leq2$, the tensor-power identity $\chi(\Phi_{R,h}^{\otimes n})=n\log(D/r)$, regularized capacity, and strong converse. \\
Strict gap $\beta_h<\alpha_R$ & Complete classification of entropy minimizers and Holevo-optimal ensembles at every tensor power. \\
Equal-size shift and phase difference sets & Complete endpoint classification into sitewise computational/Fourier products and all corresponding optimal ensembles. \\
$h_\varepsilon=(1-\varepsilon)q_H+\varepsilon u_D$ & Constant exact unassisted capacity on $[0,1]$, NPT for $\varepsilon<1$, entanglement breaking at $\varepsilon=1$, and exact entanglement-assisted capacity. \\
$h_\eta=\eta\delta_0+(1-\eta)u_D$ & Exact all-use collision-entropy phase diagram and optimizer bifurcation on the full interval. \\
$h_\eta$ above the collision threshold & Rigorous two-basis lower bound and collision upper bound on capacity; the proposed all-use von Neumann/R\'enyi formula is stated only as Conjecture~\ref{conj:eta-renyi}. \\
Singer parameters $D=q^2+q+1$, $r=q+1$ & Infinite family with $\CE/C\to2-\varepsilon$ at fixed $\varepsilon$; $D$ is always odd and may be composite. \\
\bottomrule
\end{tabular}
\caption{Logical scope of the results. Here $r=|R|$, $\alpha_R$ is the universal Parseval lower bound on the shift collision radius, and $\beta_h$ is the nontrivial phase collision radius.}
\label{tab:scope}
\end{table}

Two dimension statements should not be conflated. The Wigner-collision framework and its tensor-stable bound are formulated for an arbitrary odd integer $D$, including composite $D$. The exact-capacity theorem is conditional on the existence of a suitable cyclic difference set in $\Z_D$. The Singer construction supplies an explicit infinite subfamily indexed by prime powers $q$, but the physical dimension $D=q^2+q+1$ is not required to be a prime power.

\section{Weyl, characteristic-function, and Wigner conventions}

We keep the local Hilbert space $\mathcal H_D\cong\mathbb C^D$ of odd dimension $D\geq3$ throughout, with computational basis $\{\ket j:j\in\Z_D\}$. Thus all statements about Wigner positivity, including complete Wigner positivity, refer to the cyclic odd-dimensional discrete Wigner representation; no even-dimensional phase-space extension is claimed. Let $\Z_D$ denote integers modulo $D$, set $\omega=e^{2\pi i/D}$, and define
\begin{equation}
 X\ket j=\ket{j+1},\qquad Z\ket j=\omega^j\ket j,
 \qquad W_{a,b}=X^aZ^b,
 \label{eq:weyl}
\end{equation}
for $a,b,j\in\Z_D$. Write $\mathcal P_D:=\Z_D^2$ for the discrete phase space and $W_\xi:=W_{a,b}$ for $\xi=(a,b)\in\mathcal P_D$. Projective phases are irrelevant for the random-unitary channels considered below. Throughout, $I$ denotes the identity operator on the space indicated by context or by a subsystem label. The Weyl operators obey
\begin{equation}
 \Tr(W_{a,b}^{\dagger}W_{u,v})=D\,\delta_{a,u}\delta_{b,v},
 \label{eq:orthogonal}
\end{equation}
where $\delta_{a,u}$ is the Kronecker delta. For an operator $M$, define
\begin{equation}
 \chi_M(u,v)=\Tr(MW_{u,v}^{\dagger}).
 \label{eq:chi}
\end{equation}
We also use the Fourier basis
\begin{equation}
 \ket{\widetilde j}:=\frac1{\sqrt D}\sum_{k\in\Z_D}\omega^{-jk}\ket k,
 \qquad X\ket{\widetilde j}=\omega^j\ket{\widetilde j},
 \label{eq:fourierbasis}
\end{equation}
which is mutually unbiased with the computational basis.
Then
\begin{equation}
 M=\frac1D\sum_{u,v}\chi_M(u,v)W_{u,v},
 \qquad
 \Tr(M^{\dagger}M)=\frac1D\sum_{u,v}|\chi_M(u,v)|^2.
 \label{eq:weylparseval}
\end{equation}
For $n$ systems, write $[n]:=\{1,\ldots,n\}$ and, for any $A\subseteq[n]$, let $A^c:=[n]\setminus A$. Use $\bm z=(z_1,\ldots,z_n)\in\mathcal P_D^n$, define $W_{\bm z}:=\bigotimes_{i=1}^nW_{z_i}$ and $\chi_M(\bm z):=\Tr(MW_{\bm z}^{\dagger})$. Then
\begin{equation}
 \Tr(M^{\dagger}M)=D^{-n}\sum_{\bm z\in\mathcal P_D^n}|\chi_M(\bm z)|^2.
 \label{eq:nparseval}
\end{equation}

Let $p$ be a probability distribution on $\mathcal P_D$ and define
\begin{equation}
 \PhiP(\rho)=\sum_{a,b}p(a,b)W_{a,b}\rho W_{a,b}^{\dagger}.
 \label{eq:channel}
\end{equation}
Conjugation gives
\begin{equation}
 W_{a,b}W_{u,v}W_{a,b}^{\dagger}=\omega^{bu-av}W_{u,v}.
 \label{eq:conjugation}
\end{equation}
With the symplectic Fourier transform
\begin{equation}
 \wh p(u,v)=\sum_{a,b}p(a,b)\omega^{bu-av}.
 \label{eq:sympFT}
\end{equation}
For tensor-product expressions below, when $z_i=(u_i,v_i)$ we abbreviate $\wh p(u_i,v_i)$ as $\wh p(z_i)$. We then obtain
\begin{equation}
 \PhiP(W_{u,v})=\wh p(u,v)W_{u,v}.
 \label{eq:multiplier}
\end{equation}

For completeness, let $\bar 2:=2^{-1}\in\Z_D$ and set $T_{a,b}:=\omega^{\bar 2ab}W_{a,b}$ and $T_x:=T_{a,b}$ for $x=(a,b)$. The phase-point operators and Wigner function are
\begin{equation}
 A_0:=\frac1D\sum_{a,b\in\Z_D}T_{a,b},\qquad
 A_x:=T_xA_0T_x^\dagger,\qquad
 W_\rho(x):=\frac1D\Tr(A_x\rho).
 \label{eq:wignerdefinition}
\end{equation}
For multipartite systems we use the tensor-product phase-point basis. A state is Wigner nonnegative when all entries of this representation are nonnegative. Following Ref.~\cite{Wang2019}, a channel is called CPWP when, for every odd-dimensional reference system and every joint Wigner-nonnegative state, its extension by the identity produces another Wigner-nonnegative state. We use this definition with the cyclic $\Z_D$ representation of Ref.~\cite{Gross2006} for every odd $D$, including composite $D$.

In this representation, Weyl conjugation is translation. Hence
\begin{equation}
 W_{\PhiP(\rho)}(y)=\sum_{x\in\mathcal P_D}K_p(y|x)W_\rho(x),
 \qquad K_p(y|x)=p(y-x)\geq0,\quad y\in\mathcal P_D.
 \label{eq:wignerkernel}
\end{equation}
The kernel is doubly stochastic. Adjoining a reference replaces it by $K_p\otimes\delta_{\mathrm{ref}}$, where $\delta_{\mathrm{ref}}$ is the identity transition kernel on the reference phase space; hence the channel is CPWP directly from the preceding definition. The Fourier eigenvalues of the classical convolution $K_p$ are $\wh p(u,v)$, and those of its collision operator $K_p^{\dagger}K_p$ are $|\wh p(u,v)|^2$. Define
\begin{equation}
 \kappa(p)=\max_{(u,v)\neq(0,0)}|\wh p(u,v)|^2.
 \label{eq:kappa}
\end{equation}
Because $p$ is a probability distribution, $0\leq\kappa(p)\leq1$.

\section{What pointwise cancellation does and does not prove}

Write a signed Wigner function in terms of its positive and negative parts, $W_\rho=W_+-W_-$ with $W_\pm\geq0$ and disjoint supports. For functions $f,g$ on $\mathcal P_D$, use $\|f\|_2^2:=\sum_x|f(x)|^2$ and $\langle f,g\rangle:=\sum_x f(x)^*g(x)$. Since $K_p\geq0$,
\begin{equation}
 |K_pW_\rho|\leq K_p|W_\rho|
 \label{eq:pointwise}
\end{equation}
pointwise. Equivalently,
\begin{align}
 \|K_pW_\rho\|_2^2
 &=\|K_pW_+\|_2^2+\|K_pW_-\|_2^2
 -2\langle K_pW_+,K_pW_-\rangle\notag\\
 &\leq\|K_pW_+\|_2^2+\|K_pW_-\|_2^2.
 \label{eq:cancellation}
\end{align}
Thus negativity can create destructive interference after a positive kernel, although the cross term can also vanish.

This observation is not by itself an optimization theorem. Define the Wigner negativity by $\mathcal N(W_\rho):=\sum_xW_-(x)=[\sum_x|W_\rho(x)|-1]/2$. The absolute value $|W_\rho|$ then has total mass
\begin{equation}
 \sum_x|W_\rho(x)|=1+2\mathcal N(W_\rho),
 \label{eq:manafactor}
\end{equation}
which exceeds one for a negative Wigner function, and its normalization need not represent any quantum state. Moreover, all pure states satisfy
\begin{equation}
 D\sum_xW_\rho(x)^2=\Tr(\rho^2)=1,
 \label{eq:pureL2}
\end{equation}
so Wigner-positive and Wigner-negative pure states have the same input $\ell_2$ norm. The identity channel, whose Wigner kernel is positive, already shows that negativity need not reduce output purity. A tensor-stable no-advantage statement therefore requires the quantum constraints linking Fourier weights across subsystems. The next section supplies exactly those constraints.

\section{Tensor-stable Wigner-collision theorem}

For $\bm z=(z_1,\ldots,z_n)\in\mathcal P_D^n$, define its exact support
\begin{equation}
 \supp(\bm z)=\{i\in[n]:z_i\neq(0,0)\}.
 \label{eq:support}
\end{equation}
For a density operator $\rho$ on $n$ systems, let
\begin{equation}
 E_T(\rho)=\sum_{\supp(\bm z)=T}|\chi_\rho(\bm z)|^2,
 \qquad T\subseteq[n].
 \label{eq:ET}
\end{equation}
All $E_T$ are nonnegative even when $W_\rho$ is negative. This exact-support decomposition is a subsystem-resolved Weyl analogue of the operator-weight bookkeeping used in quantum weight enumerators~\cite{ShorLaflamme1997,Rains1998}. The identity below is standard Weyl Parseval on reduced subsystems; the new use made here is to couple it to the collision spectrum of a positive Wigner kernel.

\begin{lemma}[Cumulative support identity]
\label{lem:cumulative-support}
For every $S\subseteq[n]$,
\begin{equation}
 C_S(\rho):=\sum_{T\subseteq S}E_T(\rho)
 =D^{|S|}\Tr(\rho_S^2),
 \label{eq:cumulative}
\end{equation}
where $\rho_S:=\Tr_{S^c}\rho$.
\end{lemma}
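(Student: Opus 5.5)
The plan is to reduce the identity to Weyl--Parseval \eqref{eq:nparseval} applied to the reduced state $\rho_S$ on the $|S|$ systems in $S$. First, the left-hand side regroups as a single sum. The sets $\{\bm z:\supp(\bm z)=T\}$ for $T\subseteq S$ partition the set of $\bm z\in\mathcal P_D^n$ with $\supp(\bm z)\subseteq S$, that is, those with $z_i=(0,0)$ for every $i\in S^c$. Hence
\begin{equation*}
 C_S(\rho)=\sum_{\bm z:\,z_i=(0,0)\ \forall i\in S^c}|\chi_\rho(\bm z)|^2 .
\end{equation*}

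Next, I relate these characteristic-function values to those of the marginal. For $\bm z$ vanishing off $S$, we have $W_{\bm z}=W_{\bm z_S}\otimes I_{S^c}$, where $\bm z_S\in\mathcal P_D^{|S|}$ is the restriction of $\bm z$ to $S$ and $W_{0,0}=I$. By the defining property of the partial trace,
\begin{equation*}
 \chi_\rho(\bm z)=\Tr\!\bigl(\rho\,(W_{\bm z_S}^{\dagger}\otimes I_{S^c})\bigr)=\Tr(\rho_S W_{\bm z_S}^{\dagger})=\chi_{\rho_S}(\bm z_S).
\end{equation*}
The restriction map is a bijection between $\{\bm z:\supp(\bm z)\subseteq S\}$ and $\mathcal P_D^{|S|}$. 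Therefore $C_S(\rho)=\sum_{\bm y\in\mathcal P_D^{|S|}}|\chi_{\rho_S}(\bm y)|^2$.

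Finally, I apply \eqref{eq:nparseval} with $n$ replaced by $|S|$ and $M=\rho_S$. Since $\rho_S$ is Hermitian, this gives $\sum_{\bm y}|\chi_{\rho_S}(\bm y)|^2=D^{|S|}\Tr(\rho_S^2)$, which is \eqref{eq:cumulative}. I would also record the edge case $S=\emptyset$. There the only term is $\bm z=\bm 0$, so $C_\emptyset=|\Tr\rho|^2=1$, which agrees with the convention $\rho_\emptyset=\Tr\rho=1$ and $D^0=1$.

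There is no real obstacle. The only point needing care is the bookkeeping that exact support contained in $S$ is equivalent to trivial Weyl factors on $S^c$, so that the partial trace applies cleanly. Projective phases play no role, because each $W_{0,0}$ is exactly the identity.
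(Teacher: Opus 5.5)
Your proposal is correct and follows essentially the same route as the paper. Like the paper, you identify the modes with $\supp(\bm z)\subseteq S$ as the Weyl operators $W_{\bm z_S}\otimes I_{S^c}$, reduce their characteristic values to $\Tr(\rho_S W_{\bm z_S}^{\dagger})$ via the partial trace, and apply Weyl Parseval \eqref{eq:nparseval} on the $|S|$ subsystems; you just spell out the partition, bijection, and $S=\emptyset$ bookkeeping more explicitly.
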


\begin{proof}
The sum on the left contains exactly the Weyl operators that are identity outside $S$; $\bm z_S$ ranges over $\mathcal P_D^{|S|}$ and $I_{S^c}$ denotes the identity on the complementary subsystems:
\begin{equation}
 C_S=\sum_{\bm z_S\in\mathcal P_D^{|S|}}
 \left|\Tr\!\left[\rho(W_{\bm z_S}^{\dagger}\otimes I_{S^c})\right]\right|^2.
\end{equation}
The trace equals $\Tr(\rho_SW_{\bm z_S}^{\dagger})$. Applying Weyl Parseval on the subsystem $S$ gives \eqref{eq:cumulative}.
\end{proof}

\begin{lemma}[Support-binomial identity]
\label{lem:support-binomial}
For $0\leq\kappa\leq1$,
\begin{equation}
 \sum_{T\subseteq[n]}\kappa^{|T|}E_T
 =\sum_{S\subseteq[n]}\kappa^{|S|}(1-\kappa)^{n-|S|}C_S.
 \label{eq:binomialidentity}
\end{equation}
\end{lemma}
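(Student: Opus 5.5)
The plan is to substitute the definition of $C_S$ from Lemma~\ref{lem:cumulative-support} into the right-hand side of \eqref{eq:binomialidentity}, exchange the order of summation, and recognize a binomial sum. Note that only the definition $C_S=\sum_{T\subseteq S}E_T$ is needed here, not the purity evaluation $D^{|S|}\Tr(\rho_S^2)$. The identity is purely combinatorial and holds for any nonnegative (indeed any) numbers $E_T$.

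Concretely, we write
\begin{equation*}
 \sum_{S\subseteq[n]}\kappa^{|S|}(1-\kappa)^{n-|S|}C_S
 =\sum_{T\subseteq[n]}E_T\sum_{S:\,T\subseteq S\subseteq[n]}\kappa^{|S|}(1-\kappa)^{n-|S|}.
\end{equation*}
For fixed $T$ with $|T|=t$, we parametrize $S=T\cup U$ with $U\subseteq[n]\setminus T$ and $|U|=k$. The inner sum then becomes
\begin{equation*}
 \kappa^{t}\sum_{k=0}^{n-t}\binom{n-t}{k}\kappa^{k}(1-\kappa)^{n-t-k}
 =\kappa^{t}\bigl(\kappa+(1-\kappa)\bigr)^{n-t}=\kappa^{t}.
\end{equation*}
This yields the left-hand side $\sum_T\kappa^{|T|}E_T$.

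An equivalent route is Möbius inversion: $E_T=\sum_{S\subseteq T}(-1)^{|T|-|S|}C_S$. Substituting this and summing $\kappa^{|T|}$ over $T\supseteq S$ gives $\kappa^{|S|}(1-\kappa)^{n-|S|}$, which yields the same result.

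The only point needing care is the convention $0^0=1$ at the endpoints $\kappa\in\{0,1\}$. With that convention the binomial theorem still applies termwise, so the identity holds on the closed interval. There is no genuine obstacle. The substantive use of the identity comes later, when the nonnegative weights $\kappa^{|S|}(1-\kappa)^{n-|S|}$ are combined with $C_S=D^{|S|}\Tr(\rho_S^2)\leq D^{|S|}$.
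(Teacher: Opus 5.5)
Your proof is correct and is the same as the paper's: substitute $C_S=\sum_{T\subseteq S}E_T$, interchange the sums, write $S=T\cup U$ with $U\subseteq T^c$, and collapse the inner sum to $\kappa^{|T|}$ with the binomial theorem. The Möbius-inversion variant and your remark on the convention $0^0=1$ at $\kappa\in\{0,1\}$ are both valid but add nothing the argument needs.
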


\begin{proof}
Insert $C_S=\sum_{T\subseteq S}E_T$ into the right-hand side and interchange sums. The coefficient of a fixed $E_T$ becomes, after writing $U:=S\setminus T\subseteq T^c$,
\begin{align}
 \sum_{S\supseteq T}\kappa^{|S|}(1-\kappa)^{n-|S|}
 &=\kappa^{|T|}\sum_{U\subseteq T^c}\kappa^{|U|}(1-\kappa)^{|T^c|-|U|}\notag\\
 &=\kappa^{|T|}.
\end{align}
\end{proof}

\begin{theorem}[Tensor-stable Wigner-collision bound]
\label{thm:collision-bound}
Let $\PhiP$ be the random-Weyl channel \eqref{eq:channel}. For every $n\geq1$ and every input state $\rho$,
\begin{equation}
 \Tr\!\left[\PhiP^{\otimes n}(\rho)^2\right]
 \leq\gamma_p^n,
 \qquad
 \gamma_p:=\frac1D+\left(1-\frac1D\right)\kappa(p).
 \label{eq:mainbound}
\end{equation}
\end{theorem}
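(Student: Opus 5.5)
The proof will go through the Weyl–Parseval identity \eqref{eq:nparseval} and the two lemmas above.

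\emph{Step 1 (output purity in Fourier coordinates).} By \eqref{eq:multiplier} the tensor-power channel acts diagonally on Weyl operators: $\PhiP^{\otimes n}(W_{\bm z})=\prod_i\wh p(z_i)\,W_{\bm z}$. Hence $\chi_{\PhiP^{\otimes n}(\rho)}(\bm z)=\prod_i\wh p(z_i)\,\chi_\rho(\bm z)$, with conjugation conventions adjusted as needed; only moduli will matter. Applying \eqref{eq:nparseval} gives
\begin{equation*}
 \Tr\!\left[\PhiP^{\otimes n}(\rho)^2\right]=D^{-n}\sum_{\bm z}\prod_{i}|\wh p(z_i)|^2\,|\chi_\rho(\bm z)|^2 .
\end{equation*}

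\emph{Step 2 (support bound).} Since $\wh p(0,0)=1$ and $|\wh p(z_i)|^2\leq\kappa(p)$ whenever $z_i\neq(0,0)$, each summand is bounded by $\kappa^{|\supp(\bm z)|}|\chi_\rho(\bm z)|^2$. Grouping terms by exact support gives
\begin{equation*}
 \Tr\!\left[\PhiP^{\otimes n}(\rho)^2\right]\leq D^{-n}\sum_T\kappa^{|T|}E_T(\rho).
\end{equation*}
This step is valid for signed Wigner functions and entangled inputs, because each $E_T$ is a sum of squared moduli and therefore nonnegative.

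\emph{Step 3 (reduce to reduced purities).} Lemma~\ref{lem:support-binomial} rewrites the bound as $\sum_S\kappa^{|S|}(1-\kappa)^{n-|S|}C_S$. Lemma~\ref{lem:cumulative-support} replaces each $C_S$ by $D^{|S|}\Tr(\rho_S^2)$. The coefficients $\kappa^{|S|}(1-\kappa)^{n-|S|}$ are nonnegative for $0\leq\kappa\leq1$, so I can apply $\Tr(\rho_S^2)\leq1$ to each term. This yields
\begin{equation*}
 \Tr\!\left[\PhiP^{\otimes n}(\rho)^2\right]\leq D^{-n}\sum_S (D\kappa)^{|S|}(1-\kappa)^{n-|S|}=D^{-n}\bigl(D\kappa+1-\kappa\bigr)^n .
\end{equation*}
Since $D^{-1}(D\kappa+1-\kappa)=\tfrac1D+(1-\tfrac1D)\kappa=\gamma_p$, the right-hand side equals $\gamma_p^n$, which is \eqref{eq:mainbound}.

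The main obstacle is conceptual and is already resolved by the lemmas. A naive bound $E_T\leq C_T$ applied termwise overcounts the lower-support contributions. The binomial resummation is what turns the support-weighted sum into a positive combination of reduced purities, each bounded by one. The remaining care is to keep the sign conventions in $\chi$ and $\wh p$ consistent, which is harmless because only $|\wh p|^2$ enters. Equality conditions can be read off directly: equality requires $\rho_S$ pure for every $S$ with nonzero weight, and Fourier mass supported on modes achieving $\kappa$.
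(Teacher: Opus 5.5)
Your proposal is correct and follows the paper's proof essentially step for step. Both arguments use exact Weyl--Parseval purity, bound each nonidentity squared multiplier by $\kappa(p)$ and group by exact support, then apply the support-binomial and cumulative-support lemmas with $\Tr(\rho_S^2)\leq 1$ and resum binomially to $[1+(D-1)\kappa]^n$. Your reading of the equality conditions also matches how the paper later uses them.
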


\begin{proof}
Equations \eqref{eq:multiplier} and \eqref{eq:nparseval} give
\begin{equation}
 \Tr[\PhiP^{\otimes n}(\rho)^2]
 =D^{-n}\sum_{\bm z}\left(\prod_{i=1}^n|\wh p(z_i)|^2\right)|\chi_\rho(\bm z)|^2.
 \label{eq:exactpurity}
\end{equation}
Set $\kappa:=\kappa(p)$. The identity multiplier is one and every nonidentity squared multiplier is at most $\kappa$. Therefore
\begin{equation}
 \Tr[\PhiP^{\otimes n}(\rho)^2]
 \leq D^{-n}\sum_T\kappa^{|T|}E_T.
 \label{eq:purityupper}
\end{equation}
By Lemmas~\ref{lem:cumulative-support} and~\ref{lem:support-binomial},
\begin{align}
 \sum_T\kappa^{|T|}E_T
 &=\sum_S\kappa^{|S|}(1-\kappa)^{n-|S|}D^{|S|}\Tr(\rho_S^2)\notag\\
 &\leq\sum_S(D\kappa)^{|S|}(1-\kappa)^{n-|S|}\notag\\
 &=[1+(D-1)\kappa]^n.
 \label{eq:puritybinomial}
\end{align}
Dividing by $D^n$ proves \eqref{eq:mainbound}.
\end{proof}

The proof treats all inputs uniformly. Wigner negativity changes the distribution of the nonnegative quantities $E_T$, but cannot violate the cumulative constraints $C_S=D^{|S|}\Tr(\rho_S^2)\leq D^{|S|}$. Entanglement across uses can only decrease some marginal purities below one. Product pure states can remove this marginal-purity loss simultaneously, explaining the tensor-power form $\gamma_p^n$; saturation additionally requires their nonzero characteristic modes to lie in the top collision eigenspaces.

For a density operator $\sigma$, define $S_\alpha(\sigma):=(1-\alpha)^{-1}\log\Tr(\sigma^\alpha)$ for $\alpha\neq1$, with $S_0(\sigma):=\log\operatorname{rank}\sigma$ and $S_1(\sigma):=-\Tr(\sigma\log\sigma)$, and define $S_{\alpha,\min}(\Phi):=\min_\rho S_\alpha(\Phi(\rho))$.

\begin{corollary}[R\'enyi-entropy band]
\label{cor:renyi-band}
For every $n\geq1$ and $0\leq\alpha\leq2$,
\begin{equation}
 S_{\alpha,\min}(\PhiP^{\otimes n})
 \geq-n\log\gamma_p.
 \label{eq:renyilower}
\end{equation}
\end{corollary}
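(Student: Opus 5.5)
The plan is to reduce everything to the collision entropy $S_2$ and then use monotonicity of Rényi entropies in the order $\alpha$. By Theorem~\ref{thm:collision-bound}, for every input $\rho$ on $n$ systems,
\[
 S_2\bigl(\PhiP^{\otimes n}(\rho)\bigr)=-\log\Tr\bigl[\PhiP^{\otimes n}(\rho)^2\bigr]\geq-n\log\gamma_p .
\]
The argument has three steps.

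First, I invoke the standard fact that for any fixed density operator $\sigma$ the map $\alpha\mapsto S_\alpha(\sigma)$ is nonincreasing on $[0,\infty)$. The endpoints need a word: $S_0$ is defined as $\log\operatorname{rank}\sigma$, which equals $\lim_{\alpha\to0^+}S_\alpha(\sigma)$, and $S_1$ is the von Neumann limit. Both conventions are consistent with the monotone family.

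To justify monotonicity I would write $S_\alpha(\sigma)=\frac{1}{1-\alpha}\log\sum_k\lambda_k^{\alpha}$ in terms of the eigenvalues $\lambda_k$ of $\sigma$. Its derivative in $\alpha$ is $-(1-\alpha)^{-2}D(\mu_\alpha\Vert\lambda)\leq0$, where $\mu_\alpha\propto\lambda^\alpha$ is the escort distribution. This is a classical computation, so I would simply cite it.

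Second, for $0\leq\alpha\leq2$ monotonicity gives
\[
 S_\alpha\bigl(\PhiP^{\otimes n}(\rho)\bigr)\geq S_2\bigl(\PhiP^{\otimes n}(\rho)\bigr)\geq-n\log\gamma_p .
\]
Third, since this holds for every input $\rho$, including entangled and Wigner-negative ones, I minimize over $\rho$. The minimum exists by compactness and continuity of $S_\alpha$ for $\alpha>0$. For $\alpha=0$ rank is lower semicontinuous and integer-valued, so the minimum is attained. This yields \eqref{eq:renyilower}.

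There is no serious obstacle. The only care needed is at the endpoint conventions $\alpha=0,1$, which the continuous-limit remark above handles. I would also note that the restriction $\alpha\leq2$ is exactly where this monotonicity argument stops giving a bound, consistent with the paper's remark that nothing is claimed above order two.
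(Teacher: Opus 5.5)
Your proposal is correct and follows the paper's proof: apply Theorem~\ref{thm:collision-bound} to get $S_2\bigl(\PhiP^{\otimes n}(\rho)\bigr)\geq-n\log\gamma_p$ for every input, then use that $S_\alpha$ is nonincreasing in $\alpha$, so $S_\alpha\geq S_2$ for $0\leq\alpha\leq2$. The escort-distribution derivative, the endpoint conventions at $\alpha=0,1$, and the attainment remarks are all correct, and they add detail that the paper leaves implicit.
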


\begin{proof}
The theorem gives $S_{2,\min}\geq-n\log\gamma_p$. R\'enyi entropy is nonincreasing in its order, so $S_\alpha\geq S_2$ for $0\leq\alpha\leq2$.
\end{proof}

The numerical value of \eqref{eq:mainbound} is the Weyl-covariant output-$2$-norm estimate of Fukuda and Holevo~\cite{FukudaHolevo2006} and also the Weyl-diagonal specialization of the general tensor-stable unital-channel estimate of Fukuda and Gour~\cite{FukudaGour2017}; it is not claimed as a new bound. Here it is derived without introducing a Bloch-space matrix: $\kappa(p)$ is read directly from the classical Wigner collision operator, and all tensor-power equality conditions are expressed through reduced-state purities. Because the argument starts from output purity, Corollary~\ref{cor:renyi-band} covers $0\leq\alpha\leq2$ only. No tensor-power additivity claim for $\alpha>2$, strong-additivity claim with a different channel, or total majorization ordering of output spectra is made.

\section{Uniform shifts and Fourier extremality}

For a function $f:\Z_D\to\mathbb C$, define the one-dimensional Fourier transform $\wh f(k):=\sum_{x\in\Z_D}f(x)\omega^{kx}$. Let $R\subset\Z_D$ be nonempty and set $r:=|R|$, so $1\leq r\leq D$ and $q_R=\mathbf1_R/r$ under the convention above. Classical Parseval gives
\begin{equation}
 \sum_{k\in\Z_D}|\wh q_R(k)|^2
 =D\sum_xq_R(x)^2=\frac Dr.
 \label{eq:classicalparseval}
\end{equation}
Since $\wh q_R(0)=1$,
\begin{equation}
 \frac1{D-1}\sum_{k\neq0}|\wh q_R(k)|^2
 =\frac{D-r}{r(D-1)}=: \alpha_R.
 \label{eq:alpha}
\end{equation}
Consequently
\begin{equation}
 m_R:=\max_{k\neq0}|\wh q_R(k)|^2\geq\alpha_R.
 \label{eq:minimax}
\end{equation}

A subset $R\subset\Z_D$ is a cyclic $(D,r,\lambda)$ difference set when every nonzero $g\in\Z_D$ has exactly $\lambda$ ordered representations $g=a-a'$ with $a,a'\in R$; necessarily $\lambda(D-1)=r(r-1)$.

\begin{lemma}[Difference-set equality condition]
\label{lem:difference-set}
For $R\subset\Z_D$, $|R|=r$, the following are equivalent:
\begin{enumerate}
\item $m_R=\alpha_R$;
\item $|\wh q_R(k)|^2=\alpha_R$ for all $k\neq0$;
\item $R$ is a cyclic $(D,r,\lambda)$ difference set, with $\lambda(D-1)=r(r-1)$.
\end{enumerate}
\end{lemma}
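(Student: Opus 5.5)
The plan is to prove the cycle (1)$\Rightarrow$(2)$\Rightarrow$(3)$\Rightarrow$(1), or more economically (1)$\Leftrightarrow$(2) and (2)$\Leftrightarrow$(3). The link between the two conditions on the spectrum is the averaging identity \eqref{eq:alpha}: $\alpha_R$ is exactly the mean of the $D-1$ nonnegative numbers $|\wh q_R(k)|^2$, $k\neq0$. For (1)$\Rightarrow$(2), suppose the maximum equals the mean. Then no term can lie strictly below the mean, since otherwise some other term would have to exceed it. Hence all terms equal $\alpha_R$. The converse (2)$\Rightarrow$(1) is immediate from the definition of $m_R$.

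For (2)$\Leftrightarrow$(3) I would pass to the difference-count function $N(g):=|\{(a,a')\in R^2: a-a'=g\}|$, which equals $r^2(q_R*\tilde q_R)(g)$ with $\tilde q_R(x)=q_R(-x)$. By the convolution theorem for the transform $\wh f(k)=\sum_x f(x)\omega^{kx}$,
\begin{equation}
 \wh N(k)=\sum_{g}N(g)\omega^{kg}=\Bigl|\sum_{a\in R}\omega^{ka}\Bigr|^2=r^2|\wh q_R(k)|^2 .
\end{equation}
Since the Fourier transform on $\Z_D$ is invertible, $N$ is determined by $\{\wh N(k)\}$.

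If $R$ is a $(D,r,\lambda)$ difference set, then $N=r\delta_0+\lambda(\mathbf1_{\Z_D}-\delta_0)=(r-\lambda)\delta_0+\lambda\mathbf1_{\Z_D}$. Therefore $\wh N(k)=r-\lambda$ for $k\neq0$. Using $\lambda(D-1)=r(r-1)$,
\begin{equation}
 r-\lambda=\frac{r(D-1)-r(r-1)}{D-1}=\frac{r(D-r)}{D-1},
\end{equation}
so $|\wh q_R(k)|^2=(r-\lambda)/r^2=\alpha_R$, which is (2).

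Conversely, assume (2). Then $\wh N(k)=r^2\alpha_R=:c$ for all $k\neq0$, and $\wh N(0)=\sum_gN(g)=r^2$. Fourier inversion shows that $N$ is a linear combination of $\delta_0$ and the constant function, say $N=c\,\delta_0+\mu\mathbf1_{\Z_D}$. Comparing $\wh N(0)$ gives $c+\mu D=r^2$. Hence, for every $g\neq0$, $N(g)=\mu=(r^2-c)/D$ is independent of $g$; call it $\lambda$. It is a nonnegative integer because it is a count, so $R$ is a cyclic difference set. The relation $\lambda(D-1)=r(r-1)$ then follows by counting the $r(r-1)$ ordered pairs with $a\neq a'$, or directly from $N(0)=r$.

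The only step requiring care is the inversion step in (2)$\Rightarrow$(3): I must check that the spectral constancy forces $N$ to be constant off zero with the correct normalization, and that the degenerate cases $r=1$ and $r=D$ are covered. For $r=1$ we get $\lambda=0$; for $r=D$ we get $\alpha_R=0$ and $\lambda=D$. Both are trivial difference sets and are consistent with the lemma as stated.
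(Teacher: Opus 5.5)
Your proof is correct and follows the paper's own route: the averaging identity \eqref{eq:alpha} gives (1)$\Leftrightarrow$(2), and Fourier inversion of the difference count gives (2)$\Leftrightarrow$(3). Your $N$ is just $r^2$ times the paper's autocorrelation $c_R$, whose transform is $|\wh q_R(k)|^2$. The normalization checks via $\wh N(0)=r^2$ and $N(0)=r$, and the remarks on the trivial cases $r=1$ and $r=D$, only spell out details the paper leaves implicit.
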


\begin{proof}
The fixed average \eqref{eq:alpha} proves the equivalence of the first two statements. Define the autocorrelation
\begin{equation}
 c_R(g)=\sum_xq_R(x)q_R(x-g)
 =\frac1{r^2}\#\{(a,a')\in R^2:a-a'=g\}.
 \label{eq:autocorrelation}
\end{equation}
Its Fourier transform is $|\wh q_R(k)|^2$. If the latter equals one at zero and $\alpha_R$ elsewhere, Fourier inversion gives
\begin{equation}
 c_R(0)=\frac1r,
 \qquad
 c_R(g\neq0)=\frac{1-\alpha_R}{D}
 =\frac{r-1}{r(D-1)}.
 \label{eq:inverseauto}
\end{equation}
Thus every nonzero $g$ has $r^2c_R(g)=r(r-1)/(D-1)=\lambda$ ordered representations as $a-a'$. The converse follows by Fourier transforming the difference-set autocorrelation.
\end{proof}

Hence difference sets exactly saturate the universal lower bound on the worst nontrivial collision eigenvalue. Whenever the parameters $(D,r)$ admit a cyclic difference set, these and only these supports attain the global minimum $\alpha_R$. If no such difference set exists, the minimum is strictly larger than $\alpha_R$, and Lemma~\ref{lem:difference-set} does not classify the minimizers. This extremal statement is independent of quantum mechanics; quantum structure enters when the collision spectrum is tested on unrestricted codewords through Theorem~\ref{thm:collision-bound}.

\section{Exact capacity for shift--phase noise}

For a channel $\Phi$, let $\chi(\Phi)$ denote the optimized one-use Holevo information, define the regularized classical capacity by $C(\Phi):=\lim_{n\to\infty}n^{-1}\chi(\Phi^{\otimes n})$, and write $S_{\min}:=S_{1,\min}$. The limit exists because $\chi(\Phi^{\otimes n})$ is superadditive and bounded above by $n\log D$, so Fekete's lemma applies. Let $h$ be a probability distribution on $\Z_D$ and define
\begin{equation}
 p(a,b)=q_R(a)h(b),
 \qquad
 \PhiRh(\rho):=\sum_{a,b}q_R(a)h(b)W_{a,b}\rho W_{a,b}^{\dagger}.
 \label{eq:productchannel}
\end{equation}
With this convention,
\begin{equation}
 \wh p(u,v)=\wh h(u)\wh q_R(-v).
 \label{eq:factorization}
\end{equation}
Set
\begin{equation}
 \beta_h=\max_{u\neq0}|\wh h(u)|^2.
 \label{eq:betah}
\end{equation}
The cases $u=0,v\neq0$ and $u\neq0,v=0$ show that
\begin{equation}
 \kappa(p)=\max\{m_R,\beta_h\}.
 \label{eq:exactkappa}
\end{equation}
Indeed, when both indices are nonzero the product is at most $m_R\beta_h\leq\max\{m_R,\beta_h\}$.

For every computational-basis state, the phase operation is immaterial and the shifts are orthogonal:
\begin{equation}
 \PhiRh(\proj j)=\frac1r\sum_{a\in R}\proj{j+a}.
 \label{eq:flatoutput}
\end{equation}
This is a normalized rank-$r$ projector.

\begin{theorem}[Exact tensor-power entropies and classical capacity]
\label{thm:exact-capacity}
Suppose $R$ is a cyclic difference set and $\beta_h\leq\alpha_R$. Then for every $n\geq1$ and $0\leq\alpha\leq2$,
\begin{align}
 S_{\alpha,\min}(\PhiRh^{\otimes n})&=n\log r,
 \label{eq:exactentropy}\\
 \chi(\PhiRh^{\otimes n})&=n\log(D/r),
 \label{eq:exactholevo}\\
 C(\PhiRh)&=\log(D/r).
 \label{eq:exactcapacity}
\end{align}
The rate is attained by product computational-basis states and product computational-basis measurements. Conversely, within uniform $r$-shift product models, equality of the collision certificate $\gamma_p=1/r$ requires $R$ to be a difference set and $\beta_h\leq\alpha_R$.
\end{theorem}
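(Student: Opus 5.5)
The plan has three parts: compute the collision certificate, use it for the entropy and capacity statements, and then prove the converse.

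\textbf{Certificate.} By Lemma~\ref{lem:difference-set}, a difference set has $m_R=\alpha_R$. Hence \eqref{eq:exactkappa} together with $\beta_h\leq\alpha_R$ gives $\kappa(p)=\alpha_R=(D-r)/(r(D-1))$. Substituting into \eqref{eq:mainbound},
\begin{equation*}
 \gamma_p=\frac1D+\frac{D-1}{D}\cdot\frac{D-r}{r(D-1)}=\frac{r+D-r}{rD}=\frac1r .
\end{equation*}

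\textbf{Entropies.} Corollary~\ref{cor:renyi-band} then gives $S_{\alpha,\min}(\PhiRh^{\otimes n})\geq n\log r$ for $0\leq\alpha\leq2$. For the matching upper bound, take a product of computational-basis states. By \eqref{eq:flatoutput} its output is a tensor product of normalized rank-$r$ projectors, i.e.\ a normalized rank-$r^n$ projector. Every R\'enyi entropy of such a state equals $n\log r$, which proves \eqref{eq:exactentropy}.

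\textbf{Holevo quantity and capacity.} For the upper bound I use the covariance argument. Each $\Phi_p$ commutes with Weyl conjugation, since $\Phi_p(W_\xi\rho W_\xi^\dagger)=W_\xi\Phi_p(\rho)W_\xi^\dagger$ because the Weyl operators commute up to phases. Then $\chi(\PhiRh^{\otimes n})\leq\max S(\text{output})-S_{\min}\leq n\log D-S_{1,\min}$. From the $\alpha=1$ case this is at most $n\log D-n\log r$.

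For the lower bound, use the ensemble of uniformly random product computational-basis states $\ket{\bm j}$. The average output is $\PhiRh^{\otimes n}(I/D^n)=I/D^n$, which has entropy $n\log D$. Each individual output has entropy $n\log r$, so this ensemble achieves $n\log(D/r)$. This gives \eqref{eq:exactholevo}, and dividing by $n$ gives \eqref{eq:exactcapacity}.

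Operational attainment with product measurements is simpler. Measuring in the computational basis turns each use into the classical channel $j\mapsto j+a$ with $a$ uniform on $R$. This is a symmetric classical channel with capacity $\log D-\log r$. So product computational-basis states and measurements achieve the rate by Shannon's theorem.

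\textbf{Converse.} Within uniform $r$-shift product models, the inequality $m_R\geq\alpha_R$ from \eqref{eq:minimax} gives $\kappa(p)\geq\alpha_R$, and hence $\gamma_p\geq 1/r$. Since $\gamma_p$ is strictly increasing in $\kappa$, equality $\gamma_p=1/r$ forces $\kappa(p)=\alpha_R$. By \eqref{eq:exactkappa} this means both $m_R=\alpha_R$ and $\beta_h\leq\alpha_R$. Lemma~\ref{lem:difference-set} then shows that $m_R=\alpha_R$ makes $R$ a difference set.

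The only delicate point is the covariance step in the Holevo upper bound. One has to check that the output entropy bound is the $n\log D$ maximum and that $S_{\min}$ refers to the von Neumann case $\alpha=1$, which lies in the proved band. No new estimate is needed beyond Theorem~\ref{thm:collision-bound}.
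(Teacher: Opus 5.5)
Your proof is correct and follows essentially the same route as the paper: the certificate $\gamma_p=1/r$ via Lemma~\ref{lem:difference-set} and Eq.~\eqref{eq:exactkappa}, the collision lower bound matched by computational-basis products, the bound $\chi\leq n\log D-S_{\min}$ together with an attaining ensemble, and the identical converse. The only difference is that you attain $n\log(D/r)$ with the explicit uniform computational-basis ensemble rather than the paper's Weyl twirl of an arbitrary minimum-output state; covariance is therefore not needed in your upper bound, which is just the output-dimension bound, whereas the paper's twirl gives the general identity $\chi(\PhiRh^{\otimes n})=n\log D-S_{\min}(\PhiRh^{\otimes n})$ that it reuses in Eq.~\eqref{eq:robustholevo}.
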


\begin{proof}
Difference-set equality gives $m_R=\alpha_R$, and therefore $\kappa(p)=\alpha_R$. Direct calculation yields
\begin{equation}
 \gamma_p=\frac1D+\left(1-\frac1D\right)\alpha_R=\frac1r.
 \label{eq:gammar}
\end{equation}
The collision theorem gives $S_{2,\min}(\PhiRh^{\otimes n})\geq n\log r$. The product of the inputs in \eqref{eq:flatoutput} produces a normalized rank-$r^n$ projector, whose R\'enyi entropy equals $n\log r$ at every order. This proves \eqref{eq:exactentropy} for $0\leq\alpha\leq2$.

Random-Weyl channels are covariant under the irreducible Weyl representation:
\begin{equation}
 \PhiRh(W_{c,d}\rho W_{c,d}^{\dagger})
 =W_{c,d}\PhiRh(\rho)W_{c,d}^{\dagger}.
 \label{eq:covariance}
\end{equation}
For any ensemble, the output dimension and the definition of minimum output entropy give
\begin{equation}
 \chi(\PhiRh)\leq\log D-S_{\min}(\PhiRh).
 \label{eq:covariantupper}
\end{equation}
If $\rho_\star$ is a minimum-output state, the Weyl twirl satisfies
\begin{equation}
 \frac1{D^2}\sum_{c,d\in\Z_D}W_{c,d}\rho_\star W_{c,d}^{\dagger}
 =\frac ID.
 \label{eq:weyltwirl}
\end{equation}
The channel is unital, so the uniformly weighted Weyl orbit of $\rho_\star$ has average output $I/D$; by covariance, all its output entropies equal $S_{\min}(\PhiRh)$. It therefore attains the upper bound, and
\begin{equation}
 \chi(\PhiRh)=\log D-S_{\min}(\PhiRh).
 \label{eq:covariantformula}
\end{equation}
The same argument with the product Weyl group gives the corresponding identity for every tensor power. Equations \eqref{eq:exactholevo} and \eqref{eq:exactcapacity} follow.

The explicit local protocol sends $j\in\Z_D$ uniformly as $\proj j$ and measures in the computational basis. It produces the symmetric classical channel $y=j+a$ with $a$ uniform on $R$, whose mutual information is $\log(D/r)$. Since this equals the unrestricted regularized capacity, neither Wigner negativity, entanglement across channel uses, nor collective quantum decoding improves the rate.

Conversely, $\gamma_p=1/r$ is equivalent to $\kappa(p)=\alpha_R$. Equations \eqref{eq:minimax} and \eqref{eq:exactkappa} force $m_R=\alpha_R$ and $\beta_h\leq\alpha_R$; Lemma~\ref{lem:difference-set} then forces $R$ to be a difference set.
\end{proof}

Equations~\eqref{eq:exactentropy} and~\eqref{eq:exactholevo} are fixed-channel tensor-power identities, often called weak additivity. They do not prove $S_{\alpha,\min}(\PhiRh\otimes\Psi)=S_{\alpha,\min}(\PhiRh)+S_{\alpha,\min}(\Psi)$ or $\chi(\PhiRh\otimes\Psi)=\chi(\PhiRh)+\chi(\Psi)$ for an arbitrary second channel $\Psi$. The converse concerns saturation of this collision certificate within uniform $r$-shift factorized models. It neither classifies all random-Weyl channels whose capacity single-letterizes nor rules out other mechanisms outside the stated region.

\section{One-shot converse and strong converse}

The collision certificate also controls finite-blocklength communication. An $(n,M)$ classical code for a channel $\Phi$ consists of states $\rho_1,\ldots,\rho_M$ on $n$ input systems and a decoding positive-operator-valued measure (POVM) $\{\Lambda_m\}_{m=1}^M$ on the $n$ output systems. The codewords may be entangled across all channel uses and the POVM may be collective. For equiprobable messages, define the optimal success probability
\begin{equation}
 \Psucc^{(n)}(M;\Phi):=
 \sup_{\{\rho_m,\Lambda_m\}}
 \frac1M\sum_{m=1}^M
 \Tr\!\left[\Lambda_m\Phi^{\otimes n}(\rho_m)\right],
 \label{eq:successdefinition}
\end{equation}
where the supremum ranges over all such codewords and decoding POVMs.

\begin{theorem}[Nonasymptotic coding bound and strong converse]
\label{thm:strong-converse}
Let $\PhiP$ be an odd-dimensional random-Weyl channel and let $\gamma_p$ be defined in Eq.~\eqref{eq:mainbound}. For every $n,M\geq1$ and every $1<\alpha\leq2$,
\begin{equation}
 \Psucc^{(n)}(M;\PhiP)
 \leq
 \min\!\left\{1,
 \left[\frac{(D\gamma_p)^n}{M}\right]^{(\alpha-1)/\alpha}
 \right\}.
 \label{eq:generaloneshot}
\end{equation}
For the difference-set channels of Theorem~\ref{thm:exact-capacity}, $\gamma_p=1/r$ and hence
\begin{equation}
 \Psucc^{(n)}(M;\PhiRh)
 \leq
 \min\!\left\{1,
 \left[\frac{(D/r)^n}{M}\right]^{(\alpha-1)/\alpha}
 \right\}.
 \label{eq:exactoneshot}
\end{equation}
In particular, if $M_n\geq2^{n\mathsf R}$ with $\mathsf R>\log(D/r)$, then
\begin{equation}
 \Psucc^{(n)}(M_n;\PhiRh)
 \leq2^{-\frac n2[\mathsf R-\log(D/r)]},
 \label{eq:strongexponent}
\end{equation}
so the classical capacity obeys a strong converse.
\end{theorem}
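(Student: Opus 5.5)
We will follow the standard König--Wehner route: bound the success probability of any code through the output Rényi-$\alpha$ norm, and then use the tensor-stable purity bound of Theorem~\ref{thm:collision-bound} through Corollary~\ref{cor:renyi-band}. Fix an $(n,M)$ code $\{\rho_m,\Lambda_m\}$ and write $\sigma_m:=\PhiP^{\otimes n}(\rho_m)$. For each $m$, Hölder's inequality with conjugate exponents $\alpha$ and $\alpha/(\alpha-1)$ gives
\begin{equation}
 \Tr(\Lambda_m\sigma_m)\leq\|\sigma_m\|_\alpha\,\|\Lambda_m\|_{\alpha/(\alpha-1)}.
\end{equation}
Since $0\leq\Lambda_m\leq I$, we have $\|\Lambda_m\|_{\alpha/(\alpha-1)}^{\alpha/(\alpha-1)}=\Tr\Lambda_m^{\alpha/(\alpha-1)}\leq\Tr\Lambda_m$. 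Averaging over $m$ and applying Hölder's inequality for the uniform average over messages with the same exponents yields
\begin{equation}
 \frac1M\sum_m\Tr(\Lambda_m\sigma_m)\leq\Bigl(\frac1M\sum_m\|\sigma_m\|_\alpha^\alpha\Bigr)^{1/\alpha}\Bigl(\frac1M\sum_m\Tr\Lambda_m\Bigr)^{(\alpha-1)/\alpha}.
\end{equation}
The second factor equals $(D^n/M)^{(\alpha-1)/\alpha}$, because $\sum_m\Lambda_m=I$ on the $D^n$-dimensional output.

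Next we bound $\|\sigma_m\|_\alpha^\alpha=\Tr\sigma_m^\alpha=2^{(1-\alpha)S_\alpha(\sigma_m)}$. For $1<\alpha\leq2$, Corollary~\ref{cor:renyi-band} gives $S_\alpha(\sigma_m)\geq-n\log\gamma_p$, which holds for arbitrary entangled or Wigner-negative codewords. Hence $\Tr\sigma_m^\alpha\leq\gamma_p^{n(\alpha-1)}$. Combining the two bounds,
\begin{equation}
 \Psucc^{(n)}\leq\gamma_p^{n(\alpha-1)/\alpha}\Bigl(\frac{D^n}{M}\Bigr)^{(\alpha-1)/\alpha}=\Bigl[\frac{(D\gamma_p)^n}{M}\Bigr]^{(\alpha-1)/\alpha}.
\end{equation}
The trivial bound $\Psucc\leq1$ supplies the minimum. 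Taking the supremum over codes proves \eqref{eq:generaloneshot}.

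For the difference-set channels, \eqref{eq:gammar} gives $\gamma_p=1/r$, which yields \eqref{eq:exactoneshot}. Choosing $\alpha=2$ and using $M_n\geq2^{n\mathsf R}$, the right-hand side becomes $[(D/r)^n2^{-n\mathsf R}]^{1/2}=2^{-\frac n2[\mathsf R-\log(D/r)]}$, which is \eqref{eq:strongexponent}. This tends to zero for every $\mathsf R>\log(D/r)=C(\PhiRh)$, so the capacity obeys a strong converse.

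The only delicate step is the Hölder manipulation. One must check that the operator bound $\Lambda_m^{s}\leq\Lambda_m$ holds for $s=\alpha/(\alpha-1)\geq1$, which follows from $0\leq\Lambda_m\leq I$. One must also apply Hölder's inequality to the message average with the matching exponents, so that $\sum_m\Tr\Lambda_m=D^n$ enters exactly. Everything else is a direct substitution of the earlier Rényi band.
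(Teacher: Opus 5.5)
Your proof is correct and follows the paper's argument: you apply Schatten H\"older with $\|\Lambda_m\|_{\alpha/(\alpha-1)}\le(\Tr\Lambda_m)^{(\alpha-1)/\alpha}$, then the R\'enyi band of Corollary~\ref{cor:renyi-band}, then $\sum_m\Tr\Lambda_m=D^n$. The only difference is cosmetic: you combine messages with a discrete H\"older inequality, which needs only an averaged bound on $\Tr\sigma_m^\alpha$, whereas the paper first inserts the uniform bound $\|\sigma_m\|_\alpha\le\gamma_p^{n(\alpha-1)/\alpha}$ and then uses concavity of $x^{(\alpha-1)/\alpha}$, and both routes give the same constant.
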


\begin{proof}
Fix an arbitrary code in the supremum in Eq.~\eqref{eq:successdefinition}, set $\sigma_m=\PhiP^{\otimes n}(\rho_m)$, and write $s=(\alpha-1)/\alpha$ and $q=\alpha/(\alpha-1)$, so $1/q=s$. Corollary~\ref{cor:renyi-band} gives $S_\alpha(\sigma_m)\geq-n\log\gamma_p$, or equivalently
\begin{equation}
 \|\sigma_m\|_\alpha
 =\bigl(\Tr\sigma_m^\alpha\bigr)^{1/\alpha}
 \leq\gamma_p^{ns}.
 \label{eq:outputalphanorm}
\end{equation}
Schatten H\"older inequality, together with $0\leq\Lambda_m\leq I$, gives
\begin{equation}
 \Tr(\Lambda_m\sigma_m)
 \leq\|\sigma_m\|_\alpha\|\Lambda_m\|_q
 \leq\|\sigma_m\|_\alpha
 \bigl(\Tr\Lambda_m\bigr)^s.
 \label{eq:holdercode}
\end{equation}
Because $0<s<1$, concavity of $x^s$ and $\sum_m\Lambda_m=I$ imply
\begin{equation}
 \sum_{m=1}^M\bigl(\Tr\Lambda_m\bigr)^s
 \leq M^{1-s}\left(\sum_m\Tr\Lambda_m\right)^s
 =M^{1/\alpha}D^{ns}.
 \label{eq:POVMconcavity}
\end{equation}
Here $\sum_m\Tr\Lambda_m=\Tr I=D^n$ because the output space has dimension $D^n$. Substitution into the code's average success probability gives the nontrivial term in Eq.~\eqref{eq:generaloneshot} for that code. Combining it with the trivial bound $\Psucc^{(n)}\leq1$ and taking the supremum proves Eq.~\eqref{eq:generaloneshot}. The exact family has $\gamma_p=1/r$. Taking $\alpha=2$ and $M_n\geq2^{n\mathsf R}$ gives Eq.~\eqref{eq:strongexponent}.
\end{proof}

The theorem makes the no-advantage statement operational above capacity: no choice of Wigner-negative codewords, inter-use entanglement, or collective measurement can prevent the decoding probability from vanishing exponentially once the communication rate exceeds $\log(D/r)$.
General strong-converse theorems already follow for covariant channels from additive minimum-output R\'enyi entropy~\cite{KonigWehner2009}.  The point of Theorem~\ref{thm:strong-converse} is the channel-specific finite-block estimate~\eqref{eq:generaloneshot}, obtained directly for every code and with its constants explicit; we do not claim the existence of a strong converse itself as new.

\section{Robustness under imperfect flattening}

Let
\begin{equation}
 \delta=\max\{m_R,\beta_h\}-\alpha_R\geq0.
 \label{eq:delta}
\end{equation}
Then
\begin{equation}
 \gamma_p=\frac1r+\left(1-\frac1D\right)\delta.
 \label{eq:gammadelta}
\end{equation}
The local basis code still achieves $\log(D/r)$. For each tensor power, Weyl covariance and Corollary~\ref{cor:renyi-band} give
\begin{equation}
 \chi(\PhiRh^{\otimes n})
 =n\log D-S_{\min}(\PhiRh^{\otimes n})
 \leq n(\log D+\log\gamma_p).
 \label{eq:robustholevo}
\end{equation}
Dividing by $n$, taking the regularized limit, and using Eq.~\eqref{eq:gammadelta} yield
\begin{align}
 \log(D/r)&\leq C(\PhiRh)\leq\log D+\log\gamma_p,
 \label{eq:capacitysandwich}\\
 0&\leq C(\PhiRh)-\log(D/r)
 \leq\log\!\left[1+r\left(1-\frac1D\right)\delta\right].
 \label{eq:robustness}
\end{align}
The right-hand side quantifies the largest possible gain of arbitrary negative and entangled coding over the simple local protocol when the collision spectrum is only approximately flat. It is a one-sided certificate derived from the purity bound; no claim is made that it is tight for a generic nonextremal kernel.

\section{All-use minimum-output and ensemble rigidity}

Assume $1<r<D$, $R$ is a difference set, and
\begin{equation}
 \beta_h<\alpha_R.
 \label{eq:strict}
\end{equation}
Then $0<\kappa(p)=\alpha_R<1$; in this section write $\kappa:=\kappa(p)$. The only nonidentity single-site Weyl modes with squared multiplier $\kappa$ are $W_{0,v}=Z^v$, $v\neq0$.

\begin{proposition}[Rigidity at every tensor power]
\label{prop:rigidity}
Under \eqref{eq:strict}, let $n\geq1$ and $0\leq\alpha\leq2$. If a state $\rho$ satisfies
\begin{equation}
 S_\alpha\!\left(\PhiRh^{\otimes n}(\rho)\right)=n\log r,
 \label{eq:rigidityassumption}
\end{equation}
then
\begin{equation}
 \rho=\proj{j_1}\otimes\cdots\otimes\proj{j_n}
 \label{eq:rigidityconclusion}
\end{equation}
for some $j_1,\ldots,j_n\in\Z_D$, equivalently a product computational-basis stabilizer state up to local Weyl displacements.
\end{proposition}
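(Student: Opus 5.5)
The plan is to reduce the hypothesis to an equality case of the collision bound, Theorem~\ref{thm:collision-bound}, and then read the structure of $\rho$ off the equality conditions in its proof. Since R\'enyi entropy is nonincreasing in its order, for $0\leq\alpha\leq2$ we have $S_2(\PhiRh^{\otimes n}(\rho))\leq S_\alpha(\PhiRh^{\otimes n}(\rho))=n\log r$. Theorem~\ref{thm:exact-capacity} (via Theorem~\ref{thm:collision-bound} with $\gamma_p=1/r$) gives the reverse inequality. Hence the output purity equals $r^{-n}=\gamma_p^n$ exactly, and every inequality in the chain \eqref{eq:exactpurity}--\eqref{eq:puritybinomial} must be an equality.

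The first equality condition, \eqref{eq:purityupper}, is the key step. The difference between the two sides is
\[
D^{-n}\sum_{\bm z}\Bigl(\kappa^{|\supp(\bm z)|}-\prod_i|\wh p(z_i)|^2\Bigr)|\chi_\rho(\bm z)|^2,
\]
a sum of nonnegative terms, so it vanishes only if each term vanishes. I would therefore classify the single-site modes $z=(u,v)\neq(0,0)$ by the factorization \eqref{eq:factorization}:
\begin{itemize}
\item If $u=0$, then $|\wh p|^2=|\wh q_R(-v)|^2=\alpha_R=\kappa$ by Lemma~\ref{lem:difference-set}.
\item If $u\neq0$ and $v=0$, then $|\wh p|^2=|\wh h(u)|^2\leq\beta_h<\kappa$ by the strict gap \eqref{eq:strict}.
\item If $u\neq0$ and $v\neq0$, then $|\wh p|^2\leq\beta_h\alpha_R<\kappa$.
\end{itemize}
So the product $\prod_i|\wh p(z_i)|^2$ equals $\kappa^{|\supp\bm z|}$ exactly when every nonidentity site carries a pure $Z^v$ mode, and it is strictly smaller otherwise, because each factor is at most $\kappa$ and at least one is strictly below it. Consequently $\chi_\rho(\bm z)=0$ whenever some $z_i$ has nonzero shift component $u_i$. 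By the expansion \eqref{eq:weylparseval}, $\rho$ then lies in the span of $Z^{v_1}\otimes\cdots\otimes Z^{v_n}$, i.e.\ $\rho$ is diagonal in the product computational basis.

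The second equality condition is in \eqref{eq:puritybinomial}. There each coefficient $(D\kappa)^{|S|}(1-\kappa)^{n-|S|}$ is strictly positive because $0<\kappa<1$, which holds since $1<r<D$. Equality therefore forces $\Tr(\rho_S^2)=1$ for every $S$, and in particular for $S=[n]$, so $\rho$ is pure. A pure state that is diagonal in the orthonormal basis $\{\ket{j_1\cdots j_n}\}$ must be a single basis projector, which gives \eqref{eq:rigidityconclusion}. As a remark, purity of all marginals independently confirms the product form. The converse direction, that such states attain $n\log r$, is already contained in \eqref{eq:flatoutput}.

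I expect the main obstacle to be the termwise strictness in the first step, namely making sure that \emph{every} non-$Z$ mode, including mixed modes on several sites, is strictly suppressed. The factorization argument above handles this, but it needs both $\beta_h<\alpha_R$ and $\alpha_R<1$, so that $\beta_h\alpha_R<\alpha_R$. The hypothesis $r>1$ is what makes the latter hold, and at the degenerate endpoint $\beta_h=\alpha_R$ this is exactly where the Fourier-basis minimizers would enter.
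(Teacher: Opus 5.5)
Your proof is correct and follows the paper's argument essentially step for step. You saturate the collision bound, then use the slack identity behind \eqref{eq:purityupper} to kill every mode with a nonzero shift component, and the strictly positive coefficients in \eqref{eq:puritybinomial} to force purity, so $\rho$ is a pure state diagonal in the product computational basis; you only treat the two equality conditions in the opposite order. One small correction to your closing remark: $\beta_h\alpha_R\leq\beta_h<\alpha_R$ already gives $\beta_h\alpha_R<\alpha_R$, so the hypothesis $r>1$ (equivalently $\kappa<1$) is needed for the positivity of the binomial coefficients, not for suppressing the mixed modes.
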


\begin{proof}
Because $S_\alpha\geq S_2\geq n\log r$, Eq.~\eqref{eq:rigidityassumption} forces equality in the output-purity bound,
\begin{equation}
 \Tr\!\left[\PhiRh^{\otimes n}(\rho)^2\right]=r^{-n}=\gamma_p^n.
 \label{eq:rigiditypurity}
\end{equation}
Equality must therefore hold in both inequalities \eqref{eq:purityupper} and \eqref{eq:puritybinomial}. Since $0<\kappa<1$, every coefficient $\kappa^{|S|}(1-\kappa)^{n-|S|}$ in \eqref{eq:puritybinomial} is strictly positive. Hence
\begin{equation}
 \Tr(\rho_S^2)=1
 \label{eq:allmarginalpure}
\end{equation}
for every $S\subseteq[n]$; in particular, $\rho$ itself is pure.

Equality in \eqref{eq:purityupper} can be written as the vanishing of a sum of nonnegative slack terms,
\begin{equation}
 0=D^{-n}\sum_{\bm z}
 \left[\kappa^{|\supp(\bm z)|}
 -\prod_{i=1}^n|\wh p(z_i)|^2\right]
 |\chi_\rho(\bm z)|^2.
 \label{eq:rigidityslack}
\end{equation}
Consequently, every nonzero characteristic coefficient must use only local nonidentity modes whose squared multiplier is $\kappa$. Under \eqref{eq:strict}, these are precisely the $Z^v$ modes with $v\neq0$; the identity mode is also allowed. Thus
\begin{equation}
 \chi_\rho(\bm z)=0
 \quad\text{whenever any local component }z_i=(u_i,v_i)\text{ has }u_i\neq0.
 \label{eq:onlyZsupport}
\end{equation}
Writing $Z_i$ for $Z$ acting on subsystem $i$, the Weyl expansion of $\rho$ therefore belongs to the abelian algebra generated by $Z_1,\ldots,Z_n$ and is diagonal in the product computational basis. A diagonal density operator with unit purity has exactly one nonzero diagonal entry, proving \eqref{eq:rigidityconclusion}.
\end{proof}

The strict gap is essential for uniqueness. At $\beta_h=\alpha_R$, additional phase-space directions can share the top collision eigenvalue, and further minimizers may occur. The proposition strengthens rate optimality: inter-use entanglement and Wigner negativity are not merely unnecessary for attaining capacity; neither can occur in any minimum-output-R\'enyi minimizer throughout the proved tensor-stable band.

For each restriction below, let the corresponding $C_X$ denote the supremum of asymptotic communication rates achievable with vanishing average error over arbitrary block lengths. In $C_{\mathrm{basis+local}}$, codewords are products of computational-basis states and the receiver performs the product computational-basis measurement followed by arbitrary classical postprocessing. In $C_{\mathrm{stabilizer}}$, every block codeword belongs to the stabilizer polytope (the convex hull of pure stabilizer states) and the decoder is unrestricted. In $C_{\mathrm{Wigner+}}$, arbitrary globally Wigner-nonnegative block codewords are allowed with unrestricted collective decoding. Finally, $C_{\mathrm{unrestricted}}:=C(\PhiRh)$ allows arbitrary codewords and decoders.

\begin{corollary}[Complete optimal ensembles and resource hierarchy]
\label{cor:optimal-ensembles}
Assume the strict-gap conditions of Proposition~\ref{prop:rigidity}. An ensemble $\{p_x,\rho_x\}$ attains $\chi(\PhiRh^{\otimes n})=n\log(D/r)$ if and only if every positive-probability signal $\rho_x$ is a product computational-basis projector and the total probability assigned to each basis string is $D^{-n}$, up to relabeling or splitting identical signals. Consequently,
\begin{equation}
 C_{\mathrm{basis+local}}
 =C_{\mathrm{stabilizer}}
 =C_{\mathrm{Wigner+}}
 =C_{\mathrm{unrestricted}}
 =\log(D/r).
 \label{eq:resourcehierarchy}
\end{equation}
\end{corollary}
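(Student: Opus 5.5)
The plan has two parts: first characterize the optimal ensembles through the Holevo decomposition and Proposition~\ref{prop:rigidity}, then read off the resource hierarchy by sandwiching every restricted capacity between the basis protocol and the unrestricted capacity of Theorem~\ref{thm:exact-capacity}.

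\textbf{Structure of optimal ensembles.} For any ensemble $\{p_x,\rho_x\}$ on $n$ uses, write $\bar\rho=\sum_xp_x\rho_x$. Then
\[
\chi=S\bigl(\PhiRh^{\otimes n}(\bar\rho)\bigr)-\sum_xp_xS\bigl(\PhiRh^{\otimes n}(\rho_x)\bigr)\leq n\log D-n\log r .
\]
The first term is at most $n\log D$ because the output space has dimension $D^n$. Each term in the sum is at least $n\log r$ by Theorem~\ref{thm:exact-capacity} with $\alpha=1$. Equality $\chi=n\log(D/r)$ therefore forces two conditions:
\begin{enumerate}
\item every signal with $p_x>0$ satisfies $S(\PhiRh^{\otimes n}(\rho_x))=n\log r$;
\item the average output equals $I/D^n$.
\end{enumerate}
By Proposition~\ref{prop:rigidity} with $\alpha=1$, the first condition makes each signal a product basis projector $\proj{\bm j_x}$. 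For the second condition, let $\pi(\bm j)$ be the total probability placed on the string $\bm j$. By \eqref{eq:flatoutput} the output of the ensemble is the classical distribution $\pi*q_R^{\otimes n}$, diagonal in the computational basis. We need this convolution to be uniform. In Fourier space that means $\wh\pi(\bm k)\prod_i\wh q_R(k_i)=0$ for every $\bm k\neq0$. By Lemma~\ref{lem:difference-set}, $|\wh q_R(k)|^2=\alpha_R>0$ for $k\neq0$, since $r<D$. So the Fourier multiplier never vanishes, which forces $\wh\pi(\bm k)=0$ for all $\bm k\neq0$, i.e.\ $\pi$ is uniform with $\pi(\bm j)=D^{-n}$. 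The converse direction is immediate: such ensembles give $S(\text{average output})=n\log D$ and signal entropies $n\log r$. Identical signals may be merged or split without changing $\chi$, which accounts for the ``up to relabeling'' clause.

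\textbf{Resource hierarchy.} Each restriction in the chain is a subclass of the next, so
\[
C_{\mathrm{basis+local}}\leq C_{\mathrm{stabilizer}}\leq C_{\mathrm{Wigner+}}\leq C_{\mathrm{unrestricted}} .
\]
The middle inclusion uses that stabilizer-polytope states are Wigner nonnegative in odd $D$ (Gross). Theorem~\ref{thm:exact-capacity} gives $C_{\mathrm{unrestricted}}=\log(D/r)$. The basis protocol already achieves $\log(D/r)$: it induces the classical channel $y=j+a$ with $a$ uniform on $R$, whose Shannon capacity is $\log(D/r)$ by symmetry. Classical postprocessing then achieves this rate with vanishing error by Shannon's coding theorem. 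The whole chain therefore collapses to equality.

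\textbf{Main obstacle.} The only delicate step is passing from ``average output is maximally mixed'' to ``uniform input weights on strings''. This rests on the nonvanishing of every nontrivial Fourier coefficient of $q_R$, which is exactly the flatness supplied by the difference-set property together with $r<D$. A secondary care point is that Proposition~\ref{prop:rigidity} must be applied to mixed signals as well. This is fine because that proposition only assumes the entropy equality and itself derives purity from \eqref{eq:allmarginalpure}.
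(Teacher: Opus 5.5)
Your proposal is correct and follows essentially the same route as the paper. The paper's proof also uses the Holevo equality chain together with Proposition~\ref{prop:rigidity} to force computational-basis product signals. It then uses invertibility of convolution by $q_R^{\otimes n}$, from the nonvanishing Fourier coefficients of a difference set, to force uniform aggregate weights, and squeezes the nested restricted capacities between the local basis code and Theorem~\ref{thm:exact-capacity}. Your added details only make explicit what the paper leaves implicit: the coordinates with $k_i=0$ contribute $\wh q_R(0)=1$, and stabilizer states are Wigner nonnegative, which gives the middle inclusion.
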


\begin{proof}
For any ensemble at block length $n$,
\begin{align}
 \chi
 &=S\!\left(\sum_xp_x\PhiRh^{\otimes n}(\rho_x)\right)
 -\sum_xp_xS\!\left(\PhiRh^{\otimes n}(\rho_x)\right)\notag\\
 &\leq n\log D-n\log r.
 \label{eq:holevoequalitychain}
\end{align}
Equality requires the average output to be $I/D^n$ and every signal state to minimize the output entropy. Proposition~\ref{prop:rigidity} forces all signal states to be computational-basis product projectors.

Let $w(\bm j)$ be the aggregate probability assigned to $\proj{j_1}\otimes\cdots\otimes\proj{j_n}$. On these states the channel is the classical convolution $w\mapsto w*q_R^{\otimes n}$. A difference set satisfies $|\wh q_R(k)|^2=\alpha_R>0$ for every $k\neq0$, while $\wh q_R(0)=1$. The convolution is therefore invertible on functions on $\Z_D^n$. Its output is uniform if and only if $w$ is uniform, which proves the ensemble classification.

The basis ensemble with basis measurement realizes the memoryless classical-noise channel $Y=U+A$, where $U$ is uniform on $\Z_D$ and independent of the noise $A$, which is uniform on $R$. It achieves $I(U;Y)=\log(D/r)$ per use. The restrictions defining the four capacities are nested, while the unrestricted capacity has the same value by Theorem~\ref{thm:exact-capacity}; hence all inequalities collapse to Eq.~\eqref{eq:resourcehierarchy}.
\end{proof}

\section{Two-axis degeneracy and complete endpoint rigidity}
\label{sec:two-axis}

The strict-gap result leaves open what happens when two complementary Weyl directions share the top collision eigenvalue.  The balanced difference-set endpoint permits a complete answer.  We first isolate the elementary pure-state fact that resolves the degeneracy.

\begin{lemma}[Two-axis pure-state lemma]
\label{lem:two-axis}
Let $D$ be odd and let $\rho=\proj\psi$ be a pure state.  If
\begin{equation}
 \chi_\rho(u,v)=0\qquad\text{whenever }u\neq0\text{ and }v\neq0,
 \label{eq:twoaxissupport}
\end{equation}
then $\ket\psi$ is either a computational-basis vector $\ket j$ or a Fourier-basis vector $\ket{\widetilde j}$, up to an overall phase.
\end{lemma}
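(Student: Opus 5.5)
The plan is to split the Weyl expansion of $\rho$ into its two axis pieces and compare them using purity. By Eq.~\eqref{eq:weylparseval} and the hypothesis \eqref{eq:twoaxissupport},
\begin{equation}
 \rho=\frac1D\Bigl[I+\sum_{v\neq0}\chi_\rho(0,v)Z^v+\sum_{u\neq0}\chi_\rho(u,0)X^u\Bigr]
 =\mathrm{diag}(\rho)+\rho_X-\frac ID,
\end{equation}
where $\mathrm{diag}(\rho)=D^{-1}\sum_v\chi_\rho(0,v)Z^v$ is the computational-basis diagonal part. Likewise $\rho_X:=D^{-1}\sum_u\chi_\rho(u,0)X^u$ is the diagonal part of $\rho$ in the Fourier basis, since $X$ is diagonal there by Eq.~\eqref{eq:fourierbasis}.

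Write $a_j:=|\langle j|\psi\rangle|^2$ and $b_j:=|\langle\widetilde j|\psi\rangle|^2$. Both are probability vectors on $\Z_D$. Purity together with Parseval \eqref{eq:weylparseval} gives
\begin{equation}
 1=\Tr\rho^2=\frac1D\Bigl[1+\sum_{v\neq0}|\chi_\rho(0,v)|^2+\sum_{u\neq0}|\chi_\rho(u,0)|^2\Bigr]
 =\sum_ja_j^2+\sum_jb_j^2-\frac1D.
\end{equation}
Thus $\|a\|_2^2+\|b\|_2^2=1+1/D$, which is exactly the equality case of the Maassen--Uffink-type collision uncertainty relation for the two mutually unbiased bases.

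The key step, and the expected main obstacle, is to show that this equality forces one of $a$, $b$ to be a point mass. I would not rely on a general uncertainty relation. Instead I would use the operator identity above directly: $\rho=A+B-I/D$, with $A=\sum_ja_j\proj j$ and $B=\sum_jb_j\proj{\widetilde j}$.

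First, evaluate $\langle j|\rho|j\rangle=a_j$. This gives $a_j=a_j+\langle j|B|j\rangle-1/D$. Since $\langle j|B|j\rangle=\sum_kb_k/D=1/D$ automatically, this check is consistent but gives nothing new.

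The substantive content is that $\rho$ has rank one. I would exploit this through the off-diagonal entries in the computational basis. For $j\neq k$, $\langle j|\rho|k\rangle=\langle j|B|k\rangle=D^{-1}\sum_mb_m\omega^{m(k-j)}$, which depends only on $k-j$. Rank one gives $|\rho_{jk}|^2=a_ja_k$, so $a_ja_k=|\wh b(k-j)|^2/D^2$ for all $j\neq k$, where $\wh b$ is the Fourier transform of $b$ with the appropriate sign convention.

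Now suppose $a$ is not a point mass. Then some $a_j>0$ with $j$ not the unique support point. I would show that $a_ja_{j+g}$ is constant along each shift $g$, and that this forces $a$ to be uniform on its support with the support an entire coset structure. Because $D$ is odd there is no index-two subtlety, and the support would have to be all of $\Z_D$, giving $a$ uniform.

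The symmetric argument in the Fourier basis gives: $b$ is a point mass or $b$ is uniform. The purity identity $\|a\|_2^2+\|b\|_2^2=1+1/D$ excludes both being uniform (that gives $2/D$) and both being point masses (that gives $2$). Hence exactly one is a point mass: either $a=\delta_j$, so $\ket\psi=\ket j$ up to phase, or $b=\delta_j$, so $\ket\psi=\ket{\widetilde j}$ up to phase.

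The delicate point is the step from "$a_ja_{j+g}$ is constant in $j$ over $j\neq j+g$" to "point mass or uniform". I would handle it by taking $g$ and $-g$, and by noting that a zero of $a$ propagates along the orbits $j+g\Z_D$. If this gets tangled, the fallback is a direct extremal argument: maximize $\|a\|_2^2+\|b\|_2^2$ over pure states whose Weyl expansions have the two-axis form.
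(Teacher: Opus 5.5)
Your proposal is correct and follows essentially the paper's route. The two-axis hypothesis makes $\langle j|\rho|k\rangle$ depend only on $k-j$ for $j\neq k$, and rank one together with the odd cycle forces $a$ to be a point mass or uniform. The step you flag as delicate closes immediately: if some $a_j=0$, every constant $a_ka_{k+g}$ with $g\neq0$ vanishes, so $a$ is a point mass; otherwise $a_ka_{k+1}$ is constant, so $a$ is $2$-periodic and hence constant for odd $D$.

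The only difference is the finish. The paper reads off $\psi_{k+1}/\psi_k$ as a constant $D$th root of unity to reach a Fourier vector. You instead rerun the argument in the Fourier basis and use $\|a\|_2^2+\|b\|_2^2=1+1/D$ to exclude the both-uniform case, which works equally well.
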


\begin{proof}
Write $\psi_j=\langle j|\psi\rangle$.  In the Weyl expansion of $\rho$, an off-diagonal matrix element with row--column difference $t\neq0$ receives contributions from the modes $(t,v)$.  Assumption~\eqref{eq:twoaxissupport} leaves only $v=0$, so
\begin{equation}
 \psi_{k+t}\overline{\psi_k}=f(t)
 \label{eq:offdiagonal-circulant}
\end{equation}
for a function $f$ independent of $k$.  If every $f(t\neq0)$ vanishes, the rank-one matrix $\rho$ is diagonal and hence $\ket\psi=\ket j$ for some $j$.

Otherwise $f(t)\neq0$ for some $t$, and Eq.~\eqref{eq:offdiagonal-circulant} shows that every component $\psi_k$ is nonzero.  In particular, $f(1)\neq0$ and $|\psi_{k+1}||\psi_k|$ is independent of $k$.  Going around the odd cycle forces all $|\psi_k|$ to be equal, hence $|\psi_k|=D^{-1/2}$.  Equation~\eqref{eq:offdiagonal-circulant} with $t=1$ then makes the phase increment $\psi_{k+1}/\psi_k$ constant.  Periodicity quantizes it to a $D$th root of unity, giving one of the Fourier vectors in Eq.~\eqref{eq:fourierbasis}.
\end{proof}

Let $R,H\subset\Z_D$ be cyclic difference sets with the same parameters $(D,r,\lambda)$, where $1<r<D$; thus $q_H=\mathbf1_H/r$. Define the balanced channel
\begin{equation}
 \Phi_{R,H,0}:=\Phi_{R,q_H}.
 \label{eq:balancedendpoint}
\end{equation}
Both difference sets have the same nontrivial collision value $\alpha_R=(D-r)/[r(D-1)]$.  The only nonidentity Weyl modes with squared multiplier $\alpha_R$ are the two axes $(0,v\neq0)$ and $(u\neq0,0)$; a mode with both coordinates nonzero has squared multiplier $\alpha_R^2<\alpha_R$.

\begin{theorem}[Complete all-use rigidity at the balanced endpoint]
\label{thm:endpoint-rigidity}
For every $n\geq1$ and $0\leq\alpha\leq2$,
\begin{equation}
 S_{\alpha,\min}(\Phi_{R,H,0}^{\otimes n})=n\log r.
 \label{eq:endpoint-entropy}
\end{equation}
A state attains this minimum if and only if it is a product
\begin{equation}
 \rho=\bigotimes_{i=1}^n\proj{\psi_i},
 \qquad
 \ket{\psi_i}\in\{\ket j:j\in\Z_D\}
 \cup\{\ket{\widetilde j}:j\in\Z_D\},
 \label{eq:endpoint-minimizers}
\end{equation}
where the computational or Fourier basis may be chosen independently at every site.
\end{theorem}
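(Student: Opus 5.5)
The plan has three parts: the entropy value, then necessity of the stated form, then sufficiency.

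\textbf{Entropy value.} Both $R$ and $H$ are $(D,r,\lambda)$ difference sets, so Lemma~\ref{lem:difference-set} gives $m_R=\alpha_R$ and $\beta_{q_H}=\alpha_R$. Hence $\beta_{q_H}\leq\alpha_R$, and Theorem~\ref{thm:exact-capacity} applies directly, giving Eq.~\eqref{eq:endpoint-entropy} with $\kappa(p)=\alpha_R$ and $\gamma_p=1/r$. Since $1<r<D$, we have $0<\alpha_R<1$.

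\textbf{Necessity.} Suppose $\rho$ attains the minimum. As in Proposition~\ref{prop:rigidity}, $S_\alpha\geq S_2\geq n\log r$ forces equality in both \eqref{eq:purityupper} and \eqref{eq:puritybinomial}.
\begin{itemize}
\item[(a)] Because $0<\kappa<1$, all binomial coefficients are positive. Therefore $\Tr(\rho_S^2)=1$ for every $S$. In particular each single-site marginal $\rho_{\{i\}}$ is pure, and a pure global state with pure one-site marginals is a product $\bigotimes_i\proj{\psi_i}$. I would justify this by induction: a pure marginal decouples that site.
\item[(b)] The slack identity \eqref{eq:rigidityslack} says every nonzero $\chi_\rho(\bm z)$ uses only local modes with squared multiplier $\kappa$ or the identity.
\end{itemize}
By Eq.~\eqref{eq:factorization} and the remark before the theorem, the allowed nonidentity local modes are exactly the two axes $(0,v)$ and $(u,0)$; modes with $u,v\neq0$ have multiplier $\alpha_R^2<\alpha_R$. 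For a product state, $\chi_\rho(\bm z)=\prod_i\chi_{\psi_i}(z_i)$. Take $\bm z$ equal to $z_i$ at one site $i$ and the identity elsewhere; the other factors are $\chi_{\psi_k}(0,0)=1$. This shows that each $\proj{\psi_i}$ satisfies \eqref{eq:twoaxissupport}. Lemma~\ref{lem:two-axis} then places each $\ket{\psi_i}$ in the computational or the Fourier basis.

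\textbf{Sufficiency.} For a computational-basis site, Eq.~\eqref{eq:flatoutput} gives a normalized rank-$r$ projector. For a Fourier-basis site, $Z$ shifts $\ket{\widetilde j}$ cyclically: $Z\ket{\widetilde j}=\ket{\widetilde{j-1}}$, with the sign fixed by Eq.~\eqref{eq:fourierbasis}. Meanwhile $X$ only adds a phase. So
\[
\Phi_{R,q_H}(\proj{\widetilde j})=\tfrac1r\sum_{b\in H}\proj{\widetilde{j-b}},
\]
which is again a normalized rank-$r$ projector because $|H|=r$. Since the channel acts sitewise, a product of such inputs yields a normalized rank-$r^n$ projector. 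Its R\'enyi entropy is $n\log r$ at every order, so all states of the form \eqref{eq:endpoint-minimizers} attain the minimum.

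\textbf{Expected obstacle.} The main step is (a): deducing product structure from purity of all marginals, and then transferring the global support condition to each local factor. Factorization of $\chi_\rho$ makes the second part straightforward once the product form is established. The remainder amounts to checking the Fourier-basis covariance signs.
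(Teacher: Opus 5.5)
Your proposal is correct and follows essentially the same route as the paper. The entropy value comes from Theorem~\ref{thm:exact-capacity}, and sufficiency comes from the rank-$r$ projector outputs; your explicit check $Z\ket{\widetilde j}=\ket{\widetilde{j-1}}$ is the paper's ``orthogonal phase shifts indexed by $H$''. Necessity forces equality in \eqref{eq:purityupper} and \eqref{eq:puritybinomial}, giving a product of pure states, and then applies Lemma~\ref{lem:two-axis} at each site; your use of the factorization of $\chi_\rho$ to pass the support condition to each local factor is exactly the step the paper leaves implicit.
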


\begin{proof}
Theorem~\ref{thm:exact-capacity} gives Eq.~\eqref{eq:endpoint-entropy}.  Every state in Eq.~\eqref{eq:endpoint-minimizers} attains it: a computational-basis vector is spread uniformly over $r$ orthogonal shifts indexed by $R$, while a Fourier-basis vector is spread uniformly over $r$ orthogonal phase shifts indexed by $H$.  Their tensor products therefore produce normalized rank-$r^n$ projectors.

Conversely, equality at any order $0\leq\alpha\leq2$ forces equality in the output-purity bound exactly as in Proposition~\ref{prop:rigidity}.  Since $0<\alpha_R<1$, equality in Eq.~\eqref{eq:puritybinomial} makes every reduced state pure, so the input is a product of local pure states.  Equality in Eq.~\eqref{eq:purityupper} excludes every local Weyl mode with both coordinates nonzero.  Each local factor therefore satisfies Lemma~\ref{lem:two-axis}, which proves Eq.~\eqref{eq:endpoint-minimizers}.
\end{proof}

\begin{corollary}[All optimal ensembles at the balanced endpoint]
\label{cor:endpoint-ensembles}
An ensemble attains $\chi(\Phi_{R,H,0}^{\otimes n})=n\log(D/r)$ if and only if every positive-probability signal has the form~\eqref{eq:endpoint-minimizers} and its average input state is $I/D^n$.  In particular, any uniform product ensemble built from a fixed sitewise choice of the computational or Fourier basis is optimal.
\end{corollary}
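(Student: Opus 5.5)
The plan follows the proof of Corollary~\ref{cor:optimal-ensembles}, with Theorem~\ref{thm:endpoint-rigidity} replacing Proposition~\ref{prop:rigidity}. First we treat necessity. For any ensemble $\{p_x,\rho_x\}$ at block length $n$, the Holevo quantity obeys the chain \eqref{eq:holevoequalitychain}: the output entropy is at most $n\log D$, and each signal term is at least $S_{\min}=n\log r$ by Eq.~\eqref{eq:endpoint-entropy} at $\alpha=1$. Equality therefore forces two things. Every positive-probability signal must be a minimum-output-entropy state, and the average output $\Phi_{R,H,0}^{\otimes n}(\bar\rho)$ must equal $I/D^n$, where $\bar\rho=\sum_xp_x\rho_x$. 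Theorem~\ref{thm:endpoint-rigidity}, applied with $\alpha=1$, then says each signal is a sitewise computational/Fourier product as in Eq.~\eqref{eq:endpoint-minimizers}.

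The remaining step is to show that a uniform output forces a uniform input, i.e.\ that $\Phi_{R,H,0}^{\otimes n}(\bar\rho)=I/D^n$ if and only if $\bar\rho=I/D^n$. The plan is to use the multiplier form \eqref{eq:multiplier} on each site. The output's characteristic function is $\chi_{\bar\rho}(\bm z)\prod_i\wh p(z_i)$, and by \eqref{eq:factorization} each local multiplier is $\wh p(u,v)=\wh q_H(u)\wh q_R(-v)$. By Lemma~\ref{lem:difference-set} applied to both $R$ and $H$, every factor has modulus $1$ or $\sqrt{\alpha_R}>0$, so the multipliers never vanish. The channel is therefore injective on operators. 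Since $I/D^n$ is a fixed point, the output equals $I/D^n$ exactly when $\chi_{\bar\rho}(\bm z)=0$ for all $\bm z\neq0$, which is when $\bar\rho=I/D^n$. The main point to check is this nonvanishing, and it holds because $r<D$ gives $\alpha_R>0$. This is the only place where the balanced difference-set structure is used beyond Theorem~\ref{thm:endpoint-rigidity}.

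Sufficiency is the converse direction. If every signal is of the form \eqref{eq:endpoint-minimizers} and $\bar\rho=I/D^n$, then each output entropy equals $n\log r$ by Theorem~\ref{thm:endpoint-rigidity}. The average output is $\Phi_{R,H,0}^{\otimes n}(I/D^n)=I/D^n$ because the channel is unital, so the Holevo quantity equals $n\log D-n\log r=n\log(D/r)$. This coincides with $\chi(\Phi_{R,H,0}^{\otimes n})$ by Theorem~\ref{thm:exact-capacity}, which applies since $\beta_{q_H}=\alpha_R$.

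For the final sentence of the statement, fix a basis choice at each site and take the uniform ensemble over the $D^n$ resulting product basis vectors. At each site the average of $\proj j$ over $j$, or of $\proj{\widetilde j}$ over $j$, is $I/D$, so the tensor product averages to $I/D^n$. The criterion just proved then shows the ensemble is optimal.
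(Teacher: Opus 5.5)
Your proposal is correct and follows the paper's proof. Equality in the Holevo chain forces entropy-minimizing signals, which Theorem~\ref{thm:endpoint-rigidity} classifies, and a maximally mixed average output; the nonvanishing multipliers $\wh q_H(u)\wh q_R(-v)$ then make the unital channel injective, so the average input must be $I/D^n$. You also spell out the sufficiency direction and the uniform product-basis example, which the paper leaves implicit.
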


\begin{proof}
Equality in the Holevo bound requires every signal to minimize the output entropy and the average output to be maximally mixed.  Theorem~\ref{thm:endpoint-rigidity} gives the first condition.  Moreover,
\begin{equation}
 \wh p(u,v)=\wh q_H(u)\wh q_R(-v)
\end{equation}
is nonzero at every phase-space point because both difference sets have strictly positive nontrivial Fourier magnitude.  The channel is therefore invertible as a linear map on operators.  Since it is unital, an average output equals $I/D^n$ if and only if the average input does.
\end{proof}

\section{Exact entanglement boundary and entanglement assistance}

Let
\begin{equation}
 \ket{\Omega_D}:=\frac1{\sqrt D}\sum_{j\in\Z_D}\ket j\ket j,
 \qquad \ket{\Omega_{a,b}}:=(I\otimes W_{a,b})\ket{\Omega_D}.
 \label{eq:bell}
\end{equation}
The normalized Choi state is
\begin{equation}
 J_{R,h}=\sum_{a,b}q_R(a)h(b)\proj{\Omega_{a,b}}.
 \label{eq:choi}
\end{equation}
Each Weyl--Bell vector is an odd-dimensional stabilizer state and has a nonnegative discrete Wigner function~\cite{Gross2006}. Thus $J_{R,h}$ is Wigner nonnegative for every $h$, while the channel itself has the positive transition kernel \eqref{eq:wignerkernel} even after adjoining a reference. In what follows, $\Gamma$ denotes partial transpose on the channel-output subsystem and $\lambda_{\min}$ denotes the smallest eigenvalue.

The following criterion resolves the partial-transpose witness frequency by frequency.  It will be used both for the original dephasing path and for the balanced interpolation introduced below.

\begin{theorem}[Fourier-resolved NPT criterion]
\label{thm:fourier-npt}
Let $R$ be a nonempty proper subset of $\Z_D$.  If there are $a\in R$ and $t\neq0$ such that
\begin{equation}
 a+t\notin R,
 \qquad \wh h(t)\neq0,
 \label{eq:nptcriterion}
\end{equation}
then $J_{R,h}$ is NPT.  More precisely,
\begin{equation}
 \lambda_{\min}(J_{R,h}^{\Gamma})
 \leq
 \frac{1-\sqrt{1+4|\wh h(t)|^2}}{2rD}<0.
 \label{eq:generalNPTbound}
\end{equation}
\end{theorem}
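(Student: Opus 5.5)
The plan is to write $J_{R,h}^{\Gamma}$ explicitly in the product computational basis. Then I restrict it to a two-dimensional coordinate subspace chosen so that one diagonal entry vanishes, because $a+t\notin R$, while the coupling off-diagonal entry is proportional to $\wh h(t)$. A negative eigenvalue of that $2\times2$ compression bounds $\lambda_{\min}$ from above, by the Rayleigh quotient (or Cauchy interlacing).

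\emph{Step 1: matrix elements.} From Eq.~\eqref{eq:weyl},
\begin{equation}
 \ket{\Omega_{a,b}}=\frac1{\sqrt D}\sum_j\omega^{bj}\ket{j}\ket{j+a}.
\end{equation}
Summing over $b$ against $h$ produces $\sum_bh(b)\omega^{b(j-k)}=\wh h(j-k)$. Hence
\begin{equation}
 J_{R,h}=\frac1D\sum_{a}q_R(a)\sum_{j,k}\wh h(j-k)\,\ket{j}\!\langle k|\otimes\ket{j+a}\!\langle k+a|.
\end{equation}
The partial transpose on the output swaps $\ket{j+a}\!\langle k+a|\mapsto\ket{k+a}\!\langle j+a|$. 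This gives
\begin{equation}
 J_{R,h}^{\Gamma}=\frac1D\sum_{a}q_R(a)\sum_{j,k}\wh h(j-k)\,\ket{j,k+a}\!\langle k,j+a|.
\end{equation}
Reading off entries, the diagonal element at $\ket{x,y}$ requires $k=j=x$ and equals $q_R(y-x)/D$. The entry $\langle x,y|J^{\Gamma}|x',y'\rangle$ with $x\neq x'$ requires $j=x$, $k=x'$ and $a=y-x'=y'-x$, and then equals $q_R(a)\wh h(x-x')/D$.

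\emph{Step 2: choice of the $2\times2$ block.} Fix the given $a\in R$ and $t\neq0$, and set $e_1=\ket{0,a+t}$ and $e_2=\ket{t,a}$. By Step 1:
\begin{itemize}
 \item $\langle e_1|J^\Gamma|e_1\rangle=q_R(a+t)/D=0$, since $a+t\notin R$;
 \item $\langle e_2|J^\Gamma|e_2\rangle=q_R(a-t)/D=:d\in\{0,1/(rD)\}$;
 \item $\langle e_1|J^\Gamma|e_2\rangle=q_R(a)\wh h(-t)/D=:c$, with $|c|=|\wh h(t)|/(rD)$, because $h$ is real so $|\wh h(-t)|=|\wh h(t)|$.
\end{itemize}
The compression $\begin{pmatrix}0&c\\ \bar c&d\end{pmatrix}$ has smallest eigenvalue $\tfrac12\bigl(d-\sqrt{d^2+4|c|^2}\bigr)$. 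This expression is nondecreasing in $d\geq0$, so it is at most its value at $d=1/(rD)$, namely $\bigl(1-\sqrt{1+4|\wh h(t)|^2}\bigr)/(2rD)$. Since $\lambda_{\min}(J^\Gamma)$ is at most the minimum eigenvalue of any principal compression, this yields Eq.~\eqref{eq:generalNPTbound}. Strict negativity follows from $\wh h(t)\neq0$.

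The only delicate point is the bookkeeping in Step 1, in particular the sign conventions for $\wh h$ and which diagonal entry vanishes. I will check it by verifying that $\langle e_1|J^\Gamma|e_2\rangle$ arises from exactly one term, namely $j=0$, $k=t$ with shift $a$. The monotonicity-in-$d$ remark removes any need to know whether $a-t\in R$.
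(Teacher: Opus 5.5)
Your proposal is correct and matches the paper's proof: the same partially transposed Choi expansion, the same $2\times2$ principal block on $\ket{j,j+a+t},\ket{j+t,j+a}$ (you take $j=0$), and the same interlacing step. The only difference is cosmetic: your observation that the smaller eigenvalue is monotone in $d$ replaces the paper's case split on $s=\mathbf 1_R(a-t)$, and it avoids the paper's unnecessary appeal to $|\wh h(t)|\leq 1$.
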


\begin{proof}
Expanding Eq.~\eqref{eq:choi} and partially transposing the output gives
\begin{equation}
 J_{R,h}^{\Gamma}
 =\frac1{rD}\sum_{a\in R}\sum_{j,k\in\Z_D}
 \wh h(j-k)\,|j,k+a\rangle\!\langle k,j+a|.
 \label{eq:generalChoiPT}
\end{equation}
Fix the pair in Eq.~\eqref{eq:nptcriterion} and any $j$.  On the ordered computational-basis vectors
\begin{equation}
 |j,j+a+t\rangle,
 \qquad |j+t,j+a\rangle,
\end{equation}
the corresponding principal submatrix is
\begin{equation}
 \frac1{rD}
 \begin{pmatrix}
  0&\wh h(-t)\\
  \wh h(t)&s
 \end{pmatrix},
 \qquad s:=\mathbf1_R(a-t)\in\{0,1\}.
 \label{eq:generalNPTminor}
\end{equation}
The zero in the first diagonal entry follows from $a+t\notin R$, and the selected off-diagonal entry has a unique contribution in Eq.~\eqref{eq:generalChoiPT}. If $s=1$, the smaller eigenvalue is the right-hand side of Eq.~\eqref{eq:generalNPTbound}. If $s=0$, it is $-|\wh h(t)|/(rD)$; because $|\wh h(t)|\leq1$, this is no larger than the right-hand side of Eq.~\eqref{eq:generalNPTbound}. Eigenvalue interlacing proves the claim.
\end{proof}

We now specialize to a phase profile that interpolates between complete dephasing and the identity. Recall that $u_D(b)=1/D$, and let $\delta_0(b):=\delta_{b,0}$ denote the point mass at zero. Define
\begin{equation}
 h_\eta=\eta\delta_0+(1-\eta)u_D,
 \qquad 0\leq\eta\leq1.
 \label{eq:heta}
\end{equation}
Write $\PhiReta:=\Phi_{R,h_\eta}$ and $J_{R,\eta}:=J_{R,h_\eta}$ for the corresponding channel and normalized Choi state.
Writing $\Delta_Z(\rho)=\sum_j\langle j|\rho|j\rangle\proj j$, the channel becomes
\begin{equation}
 \PhiReta(\rho)=\frac1r\sum_{a\in R}X^a
 \left[\eta\rho+(1-\eta)\Delta_Z(\rho)\right]X^{-a}.
 \label{eq:dephasingform}
\end{equation}

\begin{theorem}[Exact entanglement-breaking boundary]
\label{thm:eb-boundary}
Let $R$ be any nonempty proper subset of $\Z_D$. Then $\PhiReta$ is entanglement breaking if and only if $\eta=0$. For every $\eta>0$, the Choi state is negative under partial transpose (NPT). More quantitatively,
\begin{equation}
 \lambda_{\min}\!\left(J_{R,\eta}^{\Gamma}\right)
 \leq\frac{1-\sqrt{1+4\eta^2}}{2rD}<0,
 \label{eq:NPTbound}
\end{equation}
\end{theorem}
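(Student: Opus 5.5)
The plan has two directions. The NPT bound for $\eta>0$ comes from Theorem~\ref{thm:fourier-npt}. The $\eta=0$ case is the statement that $\PhiReta$ is entanglement breaking, which also completes the equivalence, since NPT channels are never entanglement breaking.

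First compute the Fourier profile of $h_\eta$. With the convention $\wh f(k)=\sum_x f(x)\omega^{kx}$ we have $\wh\delta_0(t)=1$ and $\wh u_D(t)=\delta_{t,0}$. Hence $\wh h_\eta(t)=\eta$ for every $t\neq0$, and $|\wh h_\eta(t)|^2=\eta^2$.

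Next check the combinatorial hypothesis of Theorem~\ref{thm:fourier-npt}. Since $R$ is nonempty and proper, pick $a\in R$ and $b\notin R$, and set $t:=b-a\neq0$; then $a+t=b\notin R$. For $\eta>0$ we have $\wh h_\eta(t)=\eta\neq0$. Substituting $|\wh h(t)|^2=\eta^2$ into Eq.~\eqref{eq:generalNPTbound} gives Eq.~\eqref{eq:NPTbound}, and the right-hand side is strictly negative because $\sqrt{1+4\eta^2}>1$. So $J_{R,\eta}$ is NPT, hence entangled. The Choi state of an entanglement-breaking channel is separable and therefore PPT, so $\PhiReta$ is not entanglement breaking for any $\eta>0$.

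For $\eta=0$, use the form in Eq.~\eqref{eq:dephasingform}: $\Phi_{R,0}(\rho)=\frac1r\sum_{a\in R}X^a\Delta_Z(\rho)X^{-a}=\sum_j\langle j|\rho|j\rangle\,\sigma_j$, where $\sigma_j:=\frac1r\sum_{a\in R}\proj{j+a}$. This is a measure-and-prepare map: measure in the computational basis and prepare the state $\sigma_j$. Such maps are entanglement breaking by Holevo form. Equivalently, $J_{R,0}=\frac1D\sum_j\proj j\otimes\sigma_j$, which is manifestly separable.

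I expect no real obstacle. The only care points are the sign and normalization convention for $\wh h$, and the fact that the theorem requires $a+t\notin R$ rather than some other condition; both were handled above.
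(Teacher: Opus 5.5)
Your proposal is correct and is essentially the paper's proof. For $\eta>0$ the paper expands $J_{R,\eta}^{\Gamma}$ directly and examines the same $2\times2$ principal minor on $\ket{j,j+b}$ and $\ket{j+t,j+a}$, with $s=\mathbf 1_R(2a-b)=\mathbf 1_R(a-t)$, that underlies Theorem~\ref{thm:fourier-npt}; citing that criterion with $\wh h_\eta(t)=\eta$ is the same interlacing argument in packaged form, and the $\eta=0$ measure-and-prepare step and the EB $\Rightarrow$ separable Choi state $\Rightarrow$ PPT step match the paper's.
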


\begin{proof}
At $\eta=0$, Eq.~\eqref{eq:dephasingform} measures in the computational basis and prepares the classical mixture
\begin{equation}
 \Phi_{R,0}(\rho)=\sum_j\langle j|\rho|j\rangle
 \left(\frac1r\sum_{a\in R}\proj{j+a}\right),
 \label{eq:measureprepare}
\end{equation}
so the channel is entanglement breaking.

For $\eta>0$, direct expansion of the Choi state gives
\begin{align}
 J_{R,\eta}^{\Gamma}
 &=\frac1{rD}\sum_{a\in R}\sum_j
 \proj{j,j+a}\notag\\
 &\quad+\frac{\eta}{rD}\sum_{a\in R}\sum_{j\neq k}
 |j,k+a\rangle\!\langle k,j+a|.
 \label{eq:ChoiPT}
\end{align}
Choose $a\in R$ and $b\notin R$, set $t=b-a\neq0$, and take $k=j+t$. The computational-basis vector $|j,j+b\rangle=|j,k+a\rangle$ has zero diagonal entry in $J_{R,\eta}^{\Gamma}$ because $b\notin R$, but it has off-diagonal matrix element $\eta/(rD)$ with $|k,j+a\rangle$. The ordered indices in Eq.~\eqref{eq:ChoiPT} show that this selected off-diagonal entry has a unique contribution, so its coefficient cannot cancel. A positive semidefinite matrix cannot have a nonzero entry in a row whose diagonal entry is zero. Hence $J_{R,\eta}^{\Gamma}$ is not positive.

For the quantitative bound, the principal submatrix on these two basis vectors is
\begin{equation}
 \frac1{rD}\begin{pmatrix}0&\eta\\ \eta&s\end{pmatrix},
 \qquad s:=\mathbf1_R(2a-b)\in\{0,1\}.
 \label{eq:NPTminor}
\end{equation}
Its smallest eigenvalue is $-\eta/(rD)$ if $s=0$ and $[1-\sqrt{1+4\eta^2}]/(2rD)$ if $s=1$. In the first case, $\sqrt{1+4\eta^2}\leq1+2\eta$ shows that $-\eta/(rD)$ is no larger than the right-hand side of Eq.~\eqref{eq:NPTbound}. Eigenvalue interlacing therefore proves that bound in both cases. An entanglement-breaking channel has a separable Choi state~\cite{HorodeckiShorRuskai2003}, and separability implies positive partial transpose~\cite{Peres1996,Horodecki1996}; hence every $\eta>0$ channel is not entanglement breaking.
\end{proof}

The theorem is stronger than a Bell-overlap witness: it determines the exact boundary for the full interpolation and does not use the difference-set property. It does not imply positive quantum capacity; NPT is used only to certify persistent channel entanglement.

For a probability distribution $f$, write $H(f)=-\sum_xf(x)\log f(x)$, with the convention $0\log0:=0$. Pre-shared entanglement produces a complementary operational separation.

\begin{proposition}[Entanglement-assisted capacity]
\label{prop:ea-capacity}
For every factorized channel in Eq.~\eqref{eq:productchannel},
\begin{equation}
 \CE(\PhiRh)=2\log D-\log r-H(h).
 \label{eq:EAcapacity}
\end{equation}
If $R$ is a difference set, $\beta_{h_\eta}\leq\alpha_R$, and $\eta>0$, then
\begin{equation}
 \CE(\PhiReta)-C(\PhiReta)=\log D-H(h_\eta)>0.
 \label{eq:EAgap}
\end{equation}
\end{proposition}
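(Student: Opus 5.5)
We will use the Bennett--Shor--Smolin--Thapliyal formula $\CE(\Phi)=\max_\rho I(\rho,\Phi)$, where $I(\rho,\Phi)=S(\rho)+S(\Phi(\rho))-S(\Phi\otimes\mathrm{id}(\psi_\rho))$ and $\psi_\rho$ purifies $\rho$. The quantum mutual information is concave in $\rho$. For a Weyl-covariant channel it is also invariant under $\rho\mapsto W\rho W^\dagger$. Averaging over the Weyl twirl~\eqref{eq:weyltwirl} therefore shows that the maximum is attained at $\rho=I/D$. So $\CE(\PhiRh)=2\log D-S(J_{R,h})$, with $J_{R,h}$ the normalized Choi state~\eqref{eq:choi}: the input and output marginals are both $I/D$ by unitality, and $J_{R,h}$ is the joint output on the maximally entangled purification.

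Next we compute $S(J_{R,h})$. The Weyl--Bell vectors $\ket{\Omega_{a,b}}$, $(a,b)\in\mathcal P_D$, form an orthonormal basis. This follows from $\langle\Omega_{a,b}|\Omega_{a',b'}\rangle=D^{-1}\Tr(W_{a,b}^\dagger W_{a',b'})$ and~\eqref{eq:orthogonal}. Hence~\eqref{eq:choi} is already a spectral decomposition with eigenvalues $q_R(a)h(b)$. Its entropy is $H(q_R\otimes h)=H(q_R)+H(h)=\log r+H(h)$, which gives~\eqref{eq:EAcapacity}. The only point to check is that distinct $(a,b)$ give orthogonal vectors. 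Projective phases do not matter here, since the vectors enter only through their projectors.

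For~\eqref{eq:EAgap}, assume $R$ is a difference set and $\beta_{h_\eta}\le\alpha_R$. Theorem~\ref{thm:exact-capacity} then gives $C(\PhiReta)=\log(D/r)$. Subtracting this from~\eqref{eq:EAcapacity} with $h=h_\eta$ yields $\log D-H(h_\eta)$. Strict positivity holds because $H(h_\eta)=\log D$ only when $h_\eta$ is uniform. For $\eta>0$ we have $h_\eta(0)=\eta+(1-\eta)/D>1/D$, so $h_\eta$ is not uniform.

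The main obstacle is justifying the reduction to the maximally mixed input. This requires concavity of $\rho\mapsto I(\rho,\Phi)$, which is a standard property of the BSST quantity, together with covariance of the complementary or extended channel under the local Weyl action. We will cite these standard facts rather than reprove them. The remaining steps are direct computations.
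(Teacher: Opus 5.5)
Your proposal is correct and follows the paper's proof essentially step by step. Both use the BSST formula, then concavity plus Weyl covariance and the twirl~\eqref{eq:weyltwirl} to reduce to the maximally mixed input, then the Weyl--Bell spectral decomposition of $J_{R,h}$ giving $S(J_{R,h})=\log r+H(h)$, and finally Theorem~\ref{thm:exact-capacity} together with nonuniformity of $h_\eta$ for $\eta>0$. Your explicit orthonormality check $\langle\Omega_{a,b}|\Omega_{a',b'}\rangle=D^{-1}\Tr(W_{a,b}^{\dagger}W_{a',b'})$ and the bound $h_\eta(0)>1/D$ just spell out details the paper leaves implicit.
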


\begin{proof}
Let $\psi^\rho_{A'A}$ be the density operator of a purification of the input state $\rho_A$, where $A'$ is the purifying reference, and let $\operatorname{id}_{A'}$ denote the identity channel on that reference. Define the channel mutual information by
\begin{equation}
 I(\rho,\Phi):=S(\rho)+S(\Phi(\rho))
 -S\!\left[(\operatorname{id}_{A'}\otimes\Phi)(\psi^\rho_{A'A})\right].
 \label{eq:channelmutualinformation}
\end{equation}
The entanglement-assisted capacity theorem states that $\CE(\Phi)=\max_\rho I(\rho,\Phi)$~\cite{Bennett1999,Bennett2002}. Concavity of $I(\rho,\PhiRh)$ in $\rho$, Weyl covariance, and invariance under the corresponding input and output unitaries give
\begin{equation}
 I(I/D,\PhiRh)
 \geq\frac1{D^2}\sum_{c,d\in\Z_D}I(W_{c,d}\rho W_{c,d}^{\dagger},\PhiRh)
 =I(\rho,\PhiRh)
 \label{eq:EAtwirl}
\end{equation}
for every $\rho$, where Eq.~\eqref{eq:weyltwirl} was used. Thus the maximally mixed input is optimal. Its input and output entropies are both $\log D$, while the joint reference--output state is the Bell-diagonal Choi state \eqref{eq:choi}. Orthogonality of the Weyl--Bell basis gives
\begin{equation}
 S(J_{R,h})=H(q_R\otimes h)=H(q_R)+H(h)=\log r+H(h),
\end{equation}
which proves Eq.~\eqref{eq:EAcapacity}. Theorem~\ref{thm:exact-capacity} gives $C=\log(D/r)$ in the stated region. Finally, $h_\eta$ is nonuniform for every $\eta>0$, so $H(h_\eta)<\log D$.
\end{proof}

\section{Balanced bi-difference-set interpolation}
\label{sec:balanced-family}

Let $R,H\subset\Z_D$ be cyclic difference sets with the same parameters $(D,r,\lambda)$ and $1<r<D$.  With $q_H=\mathbf1_H/r$, define
\begin{equation}
 h_\varepsilon=(1-\varepsilon)q_H+\varepsilon u_D,
 \qquad
 \PhiRHeps:=\Phi_{R,h_\varepsilon},
 \qquad 0\leq\varepsilon\leq1.
 \label{eq:balancedfamily}
\end{equation}
This path removes the phase difference-set component without changing the uniform shift support.  Its two collision axes are degenerate only at $\varepsilon=0$.

\begin{theorem}[Constant capacity across an exact entanglement boundary]
\label{thm:balanced-family}
For every $n\geq1$, $0\leq\alpha\leq2$, and $0\leq\varepsilon\leq1$,
\begin{align}
 S_{\alpha,\min}(\PhiRHeps^{\otimes n})&=n\log r,
 \label{eq:balancedentropy}\\
 \chi(\PhiRHeps^{\otimes n})&=n\log(D/r),
 \qquad C(\PhiRHeps)=\log(D/r).
 \label{eq:balancedcapacity}
\end{align}
The nonasymptotic bound~\eqref{eq:exactoneshot} and its strong converse hold throughout the interval.  The complete minimizer and optimal-ensemble classifications are as follows.
\begin{enumerate}
 \item At $\varepsilon=0$, the minimizers are exactly the sitewise computational/Fourier products in Eq.~\eqref{eq:endpoint-minimizers}; an ensemble is optimal exactly under the conditions of Corollary~\ref{cor:endpoint-ensembles}.
 \item For $0<\varepsilon\leq1$, the minimizers are exactly computational-basis products, and every optimal ensemble has uniform aggregate weight $D^{-n}$ on the $D^n$ basis strings.
\end{enumerate}
Moreover, $\PhiRHeps$ is entanglement breaking if and only if $\varepsilon=1$, and its Choi state is NPT for every $0\leq\varepsilon<1$.  Its entanglement-assisted capacity is
\begin{equation}
 \CE(\PhiRHeps)=2\log D-\log r-H(h_\varepsilon),
 \label{eq:balancedEA}
\end{equation}
so that
\begin{equation}
 \CE(\Phi_{R,H,0})=2C(\Phi_{R,H,0}),
 \qquad
 \CE(\Phi_{R,H,1})=C(\Phi_{R,H,1}).
 \label{eq:balancedendpoints}
\end{equation}
In particular, $\CE>C$ at every NPT point of the path.
\end{theorem}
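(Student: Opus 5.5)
The plan is to reduce every claim to earlier results by computing the phase collision radius $\beta_{h_\varepsilon}$. Since $\wh u_D(u)=0$ for $u\neq0$, we have $\wh h_\varepsilon(u)=(1-\varepsilon)\wh q_H(u)$ for $u\neq0$. Lemma~\ref{lem:difference-set} applied to $H$ then gives $|\wh h_\varepsilon(u)|^2=(1-\varepsilon)^2\alpha_R$ for every $u\neq0$. Hence $\beta_{h_\varepsilon}=(1-\varepsilon)^2\alpha_R\leq\alpha_R$, with equality exactly at $\varepsilon=0$. Theorem~\ref{thm:exact-capacity} now yields \eqref{eq:balancedentropy} and \eqref{eq:balancedcapacity} on the whole interval. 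Since $\gamma_p=1/r$ throughout, Theorem~\ref{thm:strong-converse} gives \eqref{eq:exactoneshot} and the strong converse.

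For the classification, the case $\varepsilon=0$ is exactly Theorem~\ref{thm:endpoint-rigidity} and Corollary~\ref{cor:endpoint-ensembles}. For $0<\varepsilon\leq1$ the gap $\beta_{h_\varepsilon}<\alpha_R$ is strict, and $1<r<D$ gives $0<\alpha_R<1$. Proposition~\ref{prop:rigidity} then forces computational-basis product minimizers. Corollary~\ref{cor:optimal-ensembles} gives the uniform aggregate weight $D^{-n}$; its invertibility argument uses only $q_R$, so it applies unchanged even at $\varepsilon=1$, where $h$ is uniform.

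For the entanglement structure, the endpoint $\varepsilon=1$ has $h=u_D$, which is the completely dephasing profile $h_{\eta=0}$. Theorem~\ref{thm:eb-boundary} at $\eta=0$ shows that this channel is measure-and-prepare, hence entanglement breaking. For $\varepsilon<1$ I will apply Theorem~\ref{thm:fourier-npt}, which needs some $a\in R$ and $t\neq0$ with $a+t\notin R$ and $\wh h_\varepsilon(t)\neq0$. Since $|\wh h_\varepsilon(t)|^2=(1-\varepsilon)^2\alpha_R>0$ for every $t\neq0$ (using $r<D$), any pair works. Such a pair exists because $R$ is a nonempty proper subset: take $a\in R$, $b\notin R$, and $t=b-a$. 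NPT then rules out entanglement breaking by the same Peres/Horodecki argument as in Theorem~\ref{thm:eb-boundary}.

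The entanglement-assisted formula \eqref{eq:balancedEA} is Proposition~\ref{prop:ea-capacity} applied to $h=h_\varepsilon$. At $\varepsilon=0$, $H(q_H)=\log r$, so $\CE=2\log D-2\log r=2C$. At $\varepsilon=1$, $H(u_D)=\log D$, so $\CE=\log D-\log r=C$. For $\varepsilon<1$, $h_\varepsilon$ is nonuniform, because its Fourier transform is nonzero at every nonzero frequency. Hence $H(h_\varepsilon)<\log D$, which gives $\CE>C$ at every NPT point. The main obstacle is already handled by the earlier theorems; the only delicate point is checking that the strict-gap hypotheses, including $\kappa<1$, hold uniformly in $\varepsilon$, including the degenerate endpoint $\varepsilon=1$.
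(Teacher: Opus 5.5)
Your proposal is correct and follows the paper's proof essentially step for step: you compute $\beta_{h_\varepsilon}=(1-\varepsilon)^2\alpha_R$, then invoke Theorem~\ref{thm:exact-capacity}, Theorem~\ref{thm:strong-converse}, the endpoint and strict-gap rigidity results, Theorem~\ref{thm:fourier-npt}, and Proposition~\ref{prop:ea-capacity}. The only differences are cosmetic: you certify $H(h_\varepsilon)<\log D$ from the nonvanishing Fourier modes of $h_\varepsilon$ rather than from strict concavity, and you spell out the Peres step turning NPT into non-entanglement-breaking, which the paper leaves implicit.
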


\begin{proof}
For $k\neq0$, difference-set flatness and the vanishing nonconstant Fourier modes of $u_D$ give
\begin{equation}
 |\wh h_\varepsilon(k)|^2
 =(1-\varepsilon)^2|\wh q_H(k)|^2
 =(1-\varepsilon)^2\alpha_R.
 \label{eq:balancedFourier}
\end{equation}
Thus $\beta_{h_\varepsilon}\leq\alpha_R$, with equality only at $\varepsilon=0$.  Theorem~\ref{thm:exact-capacity} and Theorem~\ref{thm:strong-converse} prove Eqs.~\eqref{eq:balancedentropy}--\eqref{eq:balancedcapacity} and the coding bound.  The equality case at $\varepsilon=0$ is Theorem~\ref{thm:endpoint-rigidity} and Corollary~\ref{cor:endpoint-ensembles}; for $\varepsilon>0$, Proposition~\ref{prop:rigidity} and Corollary~\ref{cor:optimal-ensembles} apply because the gap is strict.

At $\varepsilon=1$, $h_\varepsilon=u_D$ and Eq.~\eqref{eq:measureprepare} shows that the channel is entanglement breaking.  If $\varepsilon<1$, choose $a\in R$ and $b\notin R$ and put $t=b-a\neq0$.  Equation~\eqref{eq:balancedFourier} implies $\wh h_\varepsilon(t)\neq0$, so Theorem~\ref{thm:fourier-npt} makes the Choi state NPT.  Finally, Eq.~\eqref{eq:balancedEA} is Proposition~\ref{prop:ea-capacity}.  The endpoint identities use $H(q_H)=\log r$ and $H(u_D)=\log D$; strict concavity of entropy gives $H(h_\varepsilon)<\log D$ for $\varepsilon<1$.
\end{proof}

The exact equality of the unassisted capacity at the two endpoints is therefore not a continuity statement: the same rate and strong converse hold at every intermediate point, even though the Choi state crosses from NPT to separable only at the final endpoint and the optimizer set changes discontinuously at the initial endpoint.

\section{Cyclic Singer family and scalable separation}

We recall the projective-geometric origin of the family. For a prime power $q$, let $\mathrm{PG}(2,q)$ be the projective plane of order $q$. Singer proved that it admits a cyclic collineation group $\langle\sigma\rangle$ of order $D=q^2+q+1$ acting regularly on its points (and hence on its lines)~\cite{Singer1938,Beth1999}. Fix a point $P$ and a line $\ell$, identify $\langle\sigma\rangle$ with $\mathbb Z_D$, and define
\begin{equation}
 R=\{t\in\mathbb Z_D:\ \sigma^t(P)\in\ell\}.
 \label{eq:singerconstruction}
\end{equation}
The line $\ell$ contains $q+1$ points, so $|R|=q+1$. For each nonzero $s\in\mathbb Z_D$, the lines $\ell$ and $\sigma^{-s}(\ell)$ are distinct and intersect in exactly one point. Writing that point uniquely as $\sigma^{a'}(P)$ gives a unique ordered pair $a,a'\in R$ with $a-a'=s$. Hence $R$ is a cyclic Singer difference set with
\begin{equation}
 (D,r,\lambda)=(q^2+q+1,q+1,1).
 \label{eq:singer}
\end{equation}
Because $q$ and $q+1$ have opposite parity, $D=q(q+1)+1$ is odd for every prime power $q$; no even-$q$ restriction is needed. Since $D=r(r-1)+1$,
\begin{equation}
 \alpha_R=\frac{D-r}{r(D-1)}=\frac{r-1}{r^2}.
 \label{eq:alphasinger}
\end{equation}
Choose two Singer difference sets $R,H\subset\Z_D$ with these parameters; one may simply take $H=R$.  For the balanced profile~\eqref{eq:balancedfamily}, put
\begin{equation}
 a_\varepsilon:=\frac{1-\varepsilon}{r}+\frac{\varepsilon}{D},
 \qquad
 b_\varepsilon:=\frac{\varepsilon}{D},
 \label{eq:balancedmasses}
\end{equation}
so that $h_\varepsilon$ takes the value $a_\varepsilon$ on $H$ and $b_\varepsilon$ on its complement.  Define
\begin{equation}
 p_H:=ra_\varepsilon=1-\varepsilon+\varepsilon r/D,
 \qquad
 p_{H^c}:=(D-r)b_\varepsilon=\varepsilon(1-r/D).
 \label{eq:groupmasses}
\end{equation}

\begin{corollary}[Scalable Singer separation]
\label{cor:singer-balanced}
For every prime power $q$ and every $0\leq\varepsilon\leq1$, the Singer channel $\PhiRHeps$ has all the properties in Theorem~\ref{thm:balanced-family}.  Its phase entropy is exactly
\begin{equation}
 H(h_\varepsilon)
 =H_2(p_H)+p_H\log r+p_{H^c}\log(D-r),
 \label{eq:balancedentropyformula}
\end{equation}
where $H_2(x)=-x\log x-(1-x)\log(1-x)$.  As $q\to\infty$ through prime powers, at every fixed $\varepsilon$,
\begin{align}
 \CE(\PhiRHeps)
 &=(2-\varepsilon)C(\PhiRHeps)-H_2(\varepsilon)+o(1),
 \label{eq:EAasymptotic}\\
 \lim_{q\to\infty}\frac{\CE(\PhiRHeps)}{C(\PhiRHeps)}
 &=2-\varepsilon.
 \label{eq:EAratio}
\end{align}
Consequently, for each fixed $0\leq\varepsilon<1$, the additive rate advantage $\CE-C$ diverges as $q\to\infty$, while the unrestricted unassisted capacity remains achievable by the same local basis protocol.
\end{corollary}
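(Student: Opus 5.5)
The plan has three steps: check hypotheses, compute the phase entropy, then expand asymptotically. \textbf{Hypotheses.} The Singer construction above yields a cyclic $(D,q+1,1)$ difference set with $D$ odd and $1<r<D$. Taking $H=R$, or any second Singer set with the same parameters, meets every hypothesis of Theorem~\ref{thm:balanced-family}. So the whole list of properties, together with $C=\log(D/r)$ and $\CE=2\log D-\log r-H(h_\varepsilon)$, transfers with nothing new to prove.

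\textbf{Phase entropy.} The distribution $h_\varepsilon$ is constant on the two blocks $H$ and $H^c$, with values $a_\varepsilon$ and $b_\varepsilon$. Apply the grouping rule for entropy: first choose the block, with probabilities $p_H$ and $p_{H^c}=1-p_H$, then choose uniformly within the block. This gives $H(h_\varepsilon)=H_2(p_H)+p_H\log r+p_{H^c}\log(D-r)$ directly. The case $\varepsilon=1$ also fits, since then $p_H=r/D$ and the formula reduces to $\log D$. We check $p_H+p_{H^c}=1$ from Eq.~\eqref{eq:groupmasses}.

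\textbf{Asymptotics.} With $r=q+1$ and $D=r^2-r+1$, we have $r/D\to0$, $\log(D-r)=2\log r+o(1)$, and $\log D=2\log r+o(1)$. Also $C=\log(D/r)=\log r+o(1)$.

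At fixed $\varepsilon$ we have $p_H\to1-\varepsilon$ and $p_{H^c}\to\varepsilon$. Continuity of $H_2$ then gives $H_2(p_H)=H_2(\varepsilon)+o(1)$. Hence
\[
H(h_\varepsilon)=H_2(\varepsilon)+(1-\varepsilon)\log r+2\varepsilon\log r+o(1),
\]
where we must take care that the correction terms such as $(\varepsilon r/D)\log r$ are $o(1)$. They are, because $(\log r)/r\to0$. Writing $p_{H^c}\log(D-r)=\varepsilon(1-r/D)(2\log r+o(1))$, the $r/D$ correction is again $O((\log r)/r)$.

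Substituting into the formula for $\CE$ gives
\[
\CE=4\log r-\log r-(1+\varepsilon)\log r-H_2(\varepsilon)+o(1)=(2-\varepsilon)\log r-H_2(\varepsilon)+o(1).
\]
Replacing $\log r$ by $C+o(1)$ yields Eq.~\eqref{eq:EAasymptotic}. Dividing by $C\to\infty$ gives the limit in Eq.~\eqref{eq:EAratio}.

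Finally, $\CE-C=(1-\varepsilon)C-H_2(\varepsilon)+o(1)$, which diverges for $\varepsilon<1$ because $H_2(\varepsilon)\le1$. The statement that the local basis protocol achieves $C$ comes from Theorem~\ref{thm:exact-capacity}.

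The only delicate point is bookkeeping the $o(1)$ terms uniformly. Everything is at fixed $\varepsilon$, so the limit $q\to\infty$ along prime powers needs only $r\to\infty$, and no step depends on which prime powers are used.
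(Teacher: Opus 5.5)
Your proposal is correct and follows the paper's proof. The paper also uses the entropy chain rule over the partition $H\cup H^c$ for the exact phase entropy. It then derives the asymptotics from $r/D\to0$ substituted into $\CE=2\log D-\log r-H(h_\varepsilon)$, and divides by the diverging $C$. The one cosmetic difference is that you rewrite everything in terms of $\log r$ using $D=r^2-r+1$. The paper instead keeps $\log D$, via $\log(D-r)=\log D+o(1)$, so that $(2-\varepsilon)\log(D/r)=(2-\varepsilon)C$ appears directly.
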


\begin{proof}
Equation~\eqref{eq:balancedentropyformula} is the entropy chain rule for the partition $H\cup H^c$, within whose two cells $h_\varepsilon$ is uniform.  Since $r/D\to0$, $\log(D-r)=\log D+o(1)$, and $C=\log(D/r)\sim\log q$, it gives
\begin{equation}
 H(h_\varepsilon)
 =H_2(\varepsilon)+(1-\varepsilon)\log r
 +\varepsilon\log D+o(1).
 \label{eq:phaseentropyasymptotic}
\end{equation}
Substitution into Eq.~\eqref{eq:balancedEA} proves Eq.~\eqref{eq:EAasymptotic}; division by the diverging $C$ proves Eq.~\eqref{eq:EAratio}.  The remaining claims follow from Theorem~\ref{thm:balanced-family}.
\end{proof}

\section{Identity--dephasing path: exact collision phase diagram and capacity conjecture}
\label{sec:identity-dephasing}

For comparison with the balanced path, return to the identity--dephasing profile~\eqref{eq:heta}.  The exact unassisted capacity is certified only below a threshold, but its minimum collision entropy and all optimizers can be solved over the entire interval.

For a general difference set, define the exact-capacity threshold
\begin{equation}
 \eta_\star:=\sqrt{\alpha_R}
 =\sqrt{\frac{D-r}{r(D-1)}};
 \qquad
 \eta_\star=\frac{\sqrt{r-1}}r\quad\text{for Singer parameters}.
 \label{eq:etastar}
\end{equation}
For the phase profile $h_\eta$ in Eq.~\eqref{eq:heta},
\begin{equation}
 h_\eta(0)=\frac{1+(D-1)\eta}{D},\qquad
 h_\eta(b\neq0)=\frac{1-\eta}{D},\qquad
 \wh h_\eta(u\neq0)=\eta.
 \label{eq:hetavalues}
\end{equation}

\begin{theorem}[All-use collision-entropy bifurcation]
\label{thm:eta-collision-phase}
Let $R$ be any nontrivial cyclic difference set and define
\begin{equation}
 \gamma_\eta:=\frac1D+\left(1-\frac1D\right)
 \max\{\alpha_R,\eta^2\}.
 \label{eq:gammaeta}
\end{equation}
Then, for every $n\geq1$ and $0\leq\eta\leq1$,
\begin{equation}
 S_{2,\min}(\PhiReta^{\otimes n})=-n\log\gamma_\eta.
 \label{eq:fullS2phase}
\end{equation}
Every minimizing input is a product state, with the complete list
\begin{equation}
 \begin{cases}
  \bigotimes_i\proj{j_i}, & 0\leq\eta<\sqrt{\alpha_R},\\[2pt]
  \bigotimes_i\proj{\psi_i},\quad
  \ket{\psi_i}\in\{\ket j\}_j\cup\{\ket{\widetilde j}\}_j,
   & \eta=\sqrt{\alpha_R},\\[2pt]
  \bigotimes_i\proj{\widetilde j_i},
   & \sqrt{\alpha_R}<\eta\leq1.
 \end{cases}
 \label{eq:etaoptimizers}
\end{equation}
Thus the collision optimizer changes sharply from the computational Weyl axis to the Fourier Weyl axis, with independent sitewise choices at the unique degeneracy point.
\end{theorem}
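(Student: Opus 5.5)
The plan is to reduce everything to Theorem~\ref{thm:collision-bound} with its equality analysis, exactly as in Proposition~\ref{prop:rigidity} and Theorem~\ref{thm:endpoint-rigidity}. The first step computes the collision spectrum of $p=q_R\otimes h_\eta$. By Eq.~\eqref{eq:factorization}, Lemma~\ref{lem:difference-set} and Eq.~\eqref{eq:hetavalues}, the squared multiplier $|\wh p(u,v)|^2=|\wh h_\eta(u)|^2|\wh q_R(-v)|^2$ takes four values. It equals $1$ at $(0,0)$. It equals $\alpha_R$ on the $Z$-axis $(0,v\neq0)$. It equals $\eta^2$ on the $X$-axis $(u\neq0,0)$. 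It equals $\eta^2\alpha_R$ off the axes. Hence $\kappa(p)=\max\{\alpha_R,\eta^2\}$, since $\eta^2\alpha_R\le\min\{\eta^2,\alpha_R\}$, and $\gamma_p=\gamma_\eta$. Theorem~\ref{thm:collision-bound} then gives $S_{2,\min}(\PhiReta^{\otimes n})\ge-n\log\gamma_\eta$.

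The second step shows this bound is attained by explicit product states. For a computational product, Eq.~\eqref{eq:flatoutput} gives output purity $r^{-n}$. Note $r^{-1}=\frac1D+(1-\frac1D)\alpha_R$ by Eq.~\eqref{eq:gammar}. For a Fourier vector $\ket{\widetilde j}$, I would compute the output purity through Eq.~\eqref{eq:exactpurity}. Its characteristic function is supported on the $X$-axis $\{(u,0)\}$ with $|\chi|=1$ at each of the $D$ points. Therefore $\Tr[\PhiReta(\proj{\widetilde j})^2]=D^{-1}(1+(D-1)\eta^2)$, and the $n$-fold product gives its $n$th power. So computational products attain $\gamma_\eta^n$ when $\eta^2\le\alpha_R$, and Fourier products attain it when $\eta^2\ge\alpha_R$. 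Both attain it at $\eta=\sqrt{\alpha_R}$. This proves Eq.~\eqref{eq:fullS2phase} and shows that every listed state is a minimizer.

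The third step is the converse classification. Equality forces equality in both Eq.~\eqref{eq:purityupper} and Eq.~\eqref{eq:puritybinomial}. If $0<\kappa<1$, all binomial weights in Eq.~\eqref{eq:puritybinomial} are positive, so every marginal is pure and $\rho$ is a product of pure states. The slack identity Eq.~\eqref{eq:rigidityslack} then restricts each local factor's characteristic support to the identity together with the nonidentity modes whose squared multiplier equals $\kappa$. There are three cases.
\begin{itemize}
\item For $\eta^2<\alpha_R$ only the $Z$-axis survives, and the diagonal argument of Proposition~\ref{prop:rigidity} gives computational states.
\item For $\eta^2=\alpha_R$ both axes survive but the off-axis modes are excluded, because $\eta^2\alpha_R<\alpha_R$. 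Lemma~\ref{lem:two-axis} then gives the sitewise two-basis list.
\item For $\eta^2>\alpha_R$ only the $X$-axis survives. Lemma~\ref{lem:two-axis} applies, and the computational option is excluded because $\ket j$ has nonzero $Z$-axis coefficients. Equivalently, conjugating by the Fourier transform swaps the two axes and returns to the first case.
\end{itemize}

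The main obstacle is the boundary parameter $\eta=1$, where $\kappa=1$. There the coefficient $(1-\kappa)^{n-|S|}$ vanishes unless $S=[n]$, so only global purity is forced and the marginal-purity argument fails. At $\eta=1$ the $X$-axis multipliers equal $1$ and the channel reduces to $\rho\mapsto\frac1r\sum_{a\in R}X^a\rho X^{-a}$. I would treat this case directly. Equality in Eq.~\eqref{eq:purityupper} forces $\chi_\rho(\bm z)=0$ whenever any local component lies off the $X$-axis, including the $Z$-axis, since $\alpha_R<1$ and $\alpha_R\eta^2<1$. Hence $\rho$ lies in the abelian algebra generated by $X_1,\dots,X_n$, so it is diagonal in the product Fourier basis. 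The pure-state condition then forces $\rho$ to be a single product Fourier projector, which is the $\eta=1$ case of Eq.~\eqref{eq:etaoptimizers}. Purity of $\rho$ itself still follows from the $S=[n]$ term, since $\gamma_\eta^n=1$ requires $\Tr\rho^2=1$. I would also double-check the trivial regime $\eta=0$, where $\kappa=\alpha_R\in(0,1)$, so the generic argument already covers it.
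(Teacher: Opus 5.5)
Your proof is correct. For $0\le\eta<1$ it is the paper's argument: the same four-valued collision spectrum giving $\kappa=\max\{\alpha_R,\eta^2\}$, the same product-basis attainment, and the same completeness step via pure marginals, the slack identity~\eqref{eq:rigidityslack}, and Lemma~\ref{lem:two-axis}. The only difference there is cosmetic: you get the Fourier-product purity from the characteristic function supported on the $X$-axis, while the paper reads it off the output spectrum $h_\eta$.

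The genuine difference is the endpoint $\eta=1$.

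\begin{itemize}
\item The paper leaves the collision machinery at that point. A rank-one output of the shift mixture $r^{-n}\sum_{\bm a\in R^n}X^{\bm a}\rho X^{-\bm a}$ forces $\rho$ to be pure with all shifted copies equal. Hence $\rho$ is a joint eigenvector of every local $X^{a-a'}$, $a,a'\in R$, and $R-R=\Z_D$ upgrades this to every local $X$.
\item You stay inside Theorem~\ref{thm:collision-bound}. With $\kappa=1$ the slack coefficient $1-\prod_i|\wh p(z_i)|^2$ is strictly positive on every mode with a local $Z$-axis or off-axis component, since those multipliers are $\alpha_R$ and $\alpha_R\eta^2=\alpha_R$, both below one. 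This places $\rho$ in the algebra generated by $X_1,\dots,X_n$. The only surviving term $S=[n]$ of~\eqref{eq:puritybinomial} then gives $\Tr\rho^2=1$.
\end{itemize}

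Your route gives one uniform equality analysis for the whole interval and needs only $\alpha_R<1$ where the paper invokes $R-R=\Z_D$; both are immediate for a nontrivial difference set. The paper's route is more elementary and avoids the degenerate binomial weights, and the attendant $0^0=1$ convention, at $\kappa=1$.
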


\begin{proof}
Equations~\eqref{eq:exactkappa} and~\eqref{eq:hetavalues} give $\kappa=\max\{\alpha_R,\eta^2\}$, so Theorem~\ref{thm:collision-bound} gives the lower bound in Eq.~\eqref{eq:fullS2phase}.  Below the threshold it is attained by computational-basis products, whose single-site output purity is $1/r= D^{-1}+(1-D^{-1})\alpha_R$.  Above the threshold it is attained by Fourier-basis products: the shift unitaries contribute only phases and the output eigenvalues are $h_\eta$, with
\begin{equation}
 \sum_bh_\eta(b)^2=\frac{1+(D-1)\eta^2}{D}=\gamma_\eta.
 \label{eq:fourierpurity}
\end{equation}
Both constructions attain the same value at the threshold, including arbitrary products that choose either basis at each site.

It remains to prove completeness.  For $\eta<1$, one has $0<\kappa<1$.  Equality in Eqs.~\eqref{eq:purityupper} and~\eqref{eq:puritybinomial} forces every one-site marginal to be pure, hence the input is a product of local pure states.  The slack identity~\eqref{eq:rigidityslack} then restricts every local characteristic function to the top collision modes: the $Z$ axis below the threshold, both axes at equality, and the $X$ axis above it.  Diagonality in the computational or Fourier basis proves the first and third lines of Eq.~\eqref{eq:etaoptimizers}; Lemma~\ref{lem:two-axis} proves the middle line.

At $\eta=1$, Eq.~\eqref{eq:fullS2phase} requires the minimizing output to have purity one. A convex combination of unitary conjugates of a density operator can be rank one only if the input itself is pure and every conjugate equals the same rank-one projector. Hence all shifted copies of the input coincide up to phase. Comparing two shift strings that differ at one site shows that the input is a joint eigenvector of every local $X^{a-a'}$ with $a,a'\in R$. The difference-set property gives $R-R=\Z_D$, so each local factor is an $X$ eigenvector and the joint eigenspaces are precisely the product Fourier vectors. This completes the endpoint case.
\end{proof}

\begin{figure}[H]
 \centering
 \includegraphics[width=0.98\textwidth]{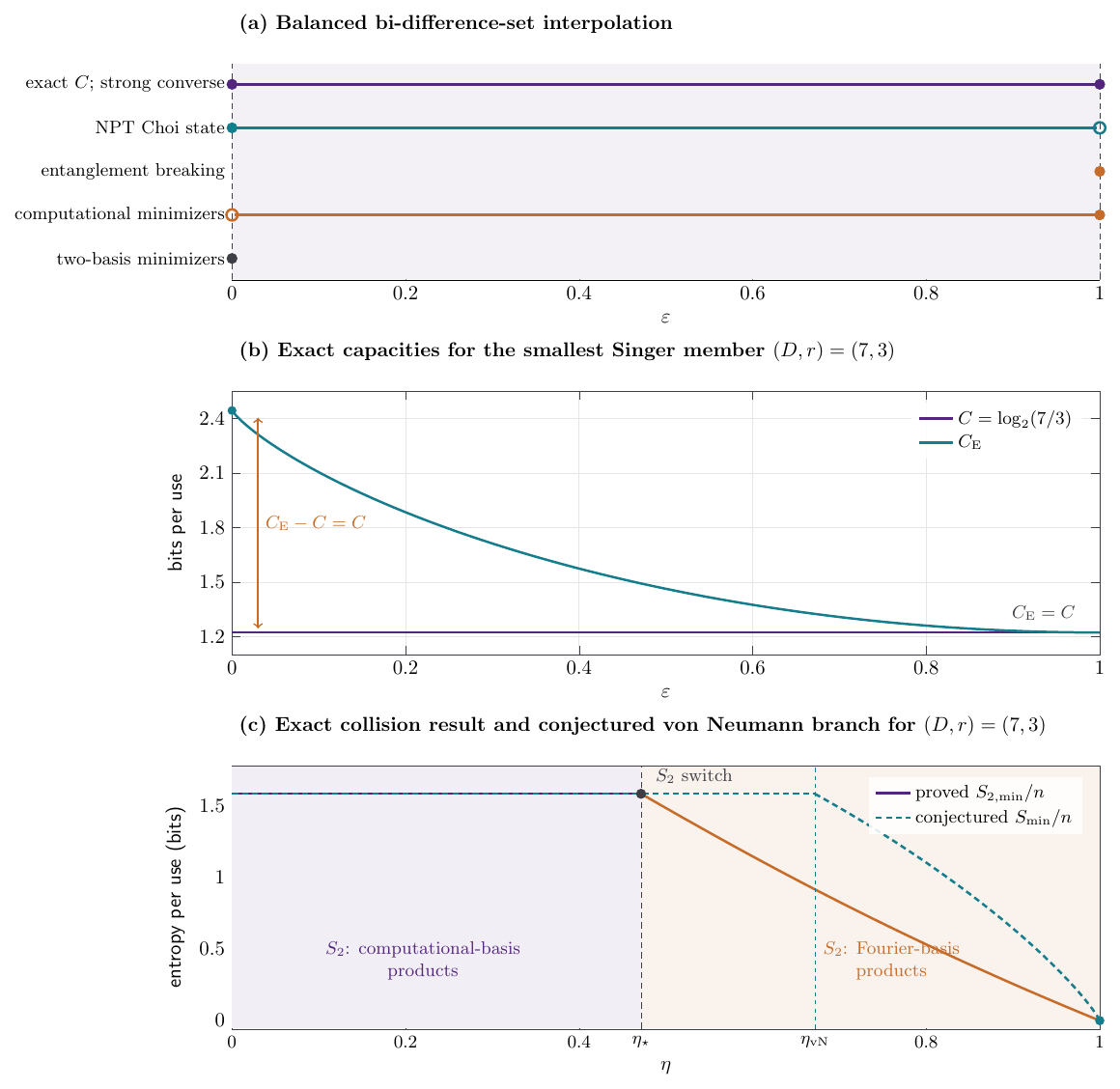}
 \caption{Exact operational and collision-entropy structure. (a) Along the balanced interpolation, $C=\log(D/r)$ and the strong converse hold on the full interval, the Choi state is NPT for $\varepsilon<1$, and the channel is entanglement breaking at $\varepsilon=1$; the optimizer set has a two-basis degeneracy only at $\varepsilon=0$. (b) For the smallest Singer member $(D,r)=(7,3)$, $C$ is constant while $C_{\mathrm E}$ decreases continuously from $2C$ to $C$. (c) For the identity--dephasing path, the solid curve is the proved all-use minimum collision entropy and changes branch at $\eta_\star=\sqrt{r-1}/r$. The dashed curve is the von Neumann expression proposed in Conjecture~\ref{conj:eta-renyi}; its distinct crossover $\eta_{\rm vN}$ is not presented as a proved capacity transition.}
 \label{fig:phase-diagram}
\end{figure}

\subsection{Rigorous capacity bounds from two complementary basis codes}

The collision optimizer need not minimize the von Neumann entropy.  Nevertheless, the two distinguished bases give two rigorous achievable rates, and the collision theorem gives a regularized converse.

\begin{proposition}[All-use capacity sandwich]
\label{prop:eta-capacity-sandwich}
Let $R$ be a nontrivial cyclic difference set and define
\begin{equation}
 L_\eta:=\max\!\left\{\log(D/r),\,\log D-H(h_\eta)\right\},
 \qquad
 U_\eta:=\log D+\log\gamma_\eta.
 \label{eq:etaLU}
\end{equation}
Then, for every $n\geq1$,
\begin{equation}
 nL_\eta\leq \chi(\PhiReta^{\otimes n})\leq nU_\eta,
 \qquad
 L_\eta\leq C(\PhiReta)\leq U_\eta.
 \label{eq:eta-capacity-sandwich}
\end{equation}
Both inequalities coincide for $0\leq\eta\leq\eta_\star$ and at $\eta=1$.
\end{proposition}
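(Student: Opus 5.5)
The plan is to prove the two block-length inequalities separately and then pass to the limit. For the lower bound $nL_\eta\leq\chi(\PhiReta^{\otimes n})$, I would exhibit two explicit product ensembles. The first is the uniform computational-basis product ensemble. By Eq.~\eqref{eq:flatoutput}, each signal output is a normalized rank-$r^n$ projector, so its entropy is $n\log r$. The average output is $I/D^n$ by unitality, so this ensemble gives $n\log(D/r)$.

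The second is the uniform Fourier-basis product ensemble $\{D^{-n},\bigotimes_i\proj{\widetilde j_i}\}$. On a Fourier vector the shifts $X^a$ act only by phases, while the phase noise $Z^b$ maps $\ket{\widetilde j}\mapsto\ket{\widetilde{j-b}}$. Hence each single-site output is diagonal in the Fourier basis with eigenvalue list $h_\eta$, which is the computation already used for Eq.~\eqref{eq:fourierpurity}. Each product signal therefore has output entropy $nH(h_\eta)$. The average output is again $I/D^n$, giving $n[\log D-H(h_\eta)]$. Taking the better of the two ensembles yields $nL_\eta$.

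For the upper bound, I would use the covariance identity $\chi(\PhiReta^{\otimes n})=n\log D-S_{\min}(\PhiReta^{\otimes n})$ from the proof of Theorem~\ref{thm:exact-capacity}, applied with the product Weyl group. Alternatively, Eq.~\eqref{eq:holevoequalitychain} gives the needed inequality directly. In either form, the bound then follows from
\begin{equation}
 S_{\min}(\PhiReta^{\otimes n})\geq S_{2,\min}(\PhiReta^{\otimes n})=-n\log\gamma_\eta,
\end{equation}
which is Theorem~\ref{thm:eta-collision-phase}, or simply Corollary~\ref{cor:renyi-band} with $\kappa=\max\{\alpha_R,\eta^2\}$. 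Dividing by $n$ and letting $n\to\infty$ gives $L_\eta\leq C\leq U_\eta$.

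The remaining step is to check where the two bounds coincide.
\begin{itemize}
 \item For $\eta\leq\eta_\star$, we have $\gamma_\eta=1/r$ by Eq.~\eqref{eq:gammar}, so $U_\eta=\log(D/r)\leq L_\eta\leq U_\eta$.
 \item At $\eta=1$, $h_1=\delta_0$, so $H(h_1)=0$ and $L_1\geq\log D$. Also $\gamma_1=1$, so $U_1=\log D$.
\end{itemize}
No step is a real obstacle. The only point needing care is the Fourier-basis output calculation: one must confirm that $Z^b$ permutes Fourier vectors and that the $X$ shifts contribute only phases, so that the output spectrum is exactly $h_\eta$.
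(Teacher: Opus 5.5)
Your proposal is correct and matches the paper's proof essentially step for step. The lower bound comes from the two uniform product ensembles, computational and Fourier, with the Fourier output spectrum equal to $h_\eta$. The upper bound comes from $\chi\leq n\log D-S_{\min}$ together with $S_{\min}\geq S_{2,\min}=-n\log\gamma_\eta$, and the coincidence checks at $\eta\leq\eta_\star$ and $\eta=1$ are the same as the paper's. Your observation that the upper bound needs only the trivial inequality rather than the covariance identity, and only Corollary~\ref{cor:renyi-band} rather than the full Theorem~\ref{thm:eta-collision-phase}, is correct and slightly economizes on what the paper cites.
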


\begin{proof}
The uniform computational-basis ensemble followed by the same basis measurement realizes the additive-noise channel with noise $q_R$ and achieves $\log(D/r)$ per use.  For a Fourier projector, Eq.~\eqref{eq:heta} gives
\begin{equation}
 \PhiReta(\proj{\widetilde j})
 =\eta\proj{\widetilde j}+(1-\eta)\frac ID.
 \label{eq:fourieroutput}
\end{equation}
The shift average is immaterial because $\ket{\widetilde j}$ is an $X$ eigenvector.  The output spectrum is $h_\eta$, and the uniformly weighted Fourier ensemble has average output $I/D$; it therefore achieves $\log D-H(h_\eta)$.  Product use of either ensemble proves the lower bound for every $n$.

For the upper bound, Weyl covariance at $n$ uses and Eq.~\eqref{eq:fullS2phase} give
\begin{align}
 \chi(\PhiReta^{\otimes n})
 &=n\log D-S_{\min}(\PhiReta^{\otimes n})\notag\\
 &\leq n\log D-S_{2,\min}(\PhiReta^{\otimes n})
 =n(\log D+\log\gamma_\eta).
\end{align}
Regularization proves the capacity bounds.  If $\eta\leq\eta_\star$, then $\gamma_\eta=1/r$ and both bounds equal $\log(D/r)$.  At $\eta=1$, $H(h_1)=0$ and $\gamma_1=1$, so both equal $\log D$.
\end{proof}

The exact-capacity condition is
\begin{equation}
 \eta^2\leq\alpha_R
 \quad\Longleftrightarrow\quad
 \eta\leq\eta_\star.
 \label{eq:capacitycondition}
\end{equation}
Theorem~\ref{thm:eb-boundary} shows that the channel has an NPT Choi state for every $\eta>0$. Therefore the exact-capacity and non-entanglement-breaking regions overlap on the full interval
\begin{equation}
 \boxed{0<\eta\leq\eta_\star}.
 \label{eq:overlap}
\end{equation}

\begin{corollary}[Identity--dephasing CPWP--NPT subfamily]
\label{cor:singer-separation}
For every prime power $q$, let $D=q^2+q+1$, $r=q+1$, and let $R\subset\Z_D$ be a cyclic Singer difference set. For every $0<\eta\leq\eta_\star=\sqrt{r-1}/r$, the channel $\Phi_{R,\eta}$ is CPWP and has an NPT Choi state, yet
\begin{equation}
 C_{\mathrm{basis+local}}
 =C_{\mathrm{stabilizer}}
 =C_{\mathrm{Wigner+}}
 =C_{\mathrm{unrestricted}}
 =\log(D/r)<\CE(\Phi_{R,\eta}).
 \label{eq:singerheadline}
\end{equation}
The unassisted capacity obeys the nonasymptotic bound in Eq.~\eqref{eq:exactoneshot} and hence a strong converse against arbitrary codes. If $0<\eta<\eta_\star$, every minimum-output-R\'enyi state in the proved band and every Holevo-optimal block ensemble at every tensor power has the product-basis form classified in Proposition~\ref{prop:rigidity} and Corollary~\ref{cor:optimal-ensembles}.
\end{corollary}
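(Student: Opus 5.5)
The corollary assembles earlier results, so the plan is to verify the hypotheses of each cited statement for the Singer channel $\Phi_{R,\eta}$ with $0<\eta\leq\eta_\star$.

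\textbf{CPWP and NPT.} Complete Wigner positivity comes directly from the kernel representation \eqref{eq:wignerkernel}. That representation holds for every random-Weyl channel and every odd $D$, and $D=q^2+q+1$ is odd. The NPT property follows from Theorem~\ref{thm:eb-boundary}, which needs only that $R$ is a nonempty proper subset of $\Z_D$ and that $\eta>0$. Here $R$ is proper because $r=q+1<D$.

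\textbf{Capacity equalities.} Next I check the hypothesis of Theorem~\ref{thm:exact-capacity}. By \eqref{eq:hetavalues}, $\beta_{h_\eta}=\eta^2$, so the condition $\beta_{h_\eta}\leq\alpha_R$ is exactly $\eta\leq\eta_\star$, as in \eqref{eq:capacitycondition}. For Singer parameters, \eqref{eq:alphasinger} gives $\alpha_R=(r-1)/r^2$, hence $\eta_\star=\sqrt{r-1}/r$. Theorem~\ref{thm:exact-capacity} then gives $C_{\mathrm{unrestricted}}=\log(D/r)$, and Theorem~\ref{thm:strong-converse} gives \eqref{eq:exactoneshot} and the strong converse.

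For the chain of restricted capacities I use nesting. The computational-basis code lies in the basis$+$local class, and also in the stabilizer class, since computational-basis states are stabilizer states. Every stabilizer-polytope state is Wigner nonnegative by Gross's theorem \cite{Gross2006}. Wigner-nonnegative codewords are in turn a special case of unrestricted codewords. The basis$+$local protocol achieves $\log(D/r)$, as shown in Corollary~\ref{cor:optimal-ensembles}, and the unrestricted capacity equals $\log(D/r)$. Therefore every term in the chain equals $\log(D/r)$, and this part of the argument does not need the strict gap. The strict inequality $\log(D/r)<\CE$ is Proposition~\ref{prop:ea-capacity}, i.e.\ \eqref{eq:EAgap}, which holds for $\eta>0$.

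\textbf{Rigidity for $0<\eta<\eta_\star$.} In this range $\beta_{h_\eta}=\eta^2<\alpha_R$, and $1<r<D$ holds because $q\geq2$. So Proposition~\ref{prop:rigidity} and Corollary~\ref{cor:optimal-ensembles} apply directly.

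\textbf{Expected obstacle.} The main point needing care is the stabilizer $\subseteq$ Wigner$+$ inclusion for block codewords. This requires that multipartite stabilizer states, including entangled ones, have nonnegative Wigner functions in the tensor-product phase-point basis for odd $D$. Gross proves this for prime-power $D$, but here $D$ may be composite. I would handle this by citing the cyclic $\Z_D$ version of the result, or by a direct argument that each stabilizer state's Wigner function is uniform on an affine Lagrangian subset. If that inclusion cannot be secured, I would fall back on the sandwich from nesting alone: each restricted capacity is bounded below by $C_{\mathrm{basis+local}}$ and above by $C_{\mathrm{unrestricted}}$, and both of these equal $\log(D/r)$.
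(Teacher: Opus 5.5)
Your proposal is correct and matches the paper's proof. Like the paper, you check $\beta_{h_\eta}=\eta^2\leq\alpha_R=(r-1)/r^2$ and then invoke the Wigner kernel representation, Theorems~\ref{thm:eb-boundary}, \ref{thm:exact-capacity} and~\ref{thm:strong-converse}, Proposition~\ref{prop:ea-capacity}, and the strict-gap results Proposition~\ref{prop:rigidity} and Corollary~\ref{cor:optimal-ensembles}. Your fallback sandwich (each restricted class contains the product computational-basis code and sits inside the unrestricted class) already settles the equality chain, so the stabilizer $\subseteq$ Wigner-nonnegative inclusion you flag is not needed; the gap-free achievability of $\log(D/r)$ is better cited to the explicit local protocol in Theorem~\ref{thm:exact-capacity} than to the strict-gap Corollary~\ref{cor:optimal-ensembles}.
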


\begin{proof}
Singer difference sets satisfy $m_R=\alpha_R=(r-1)/r^2$, while Eq.~\eqref{eq:hetavalues} gives $\beta_{h_\eta}=\eta^2$. Thus $\eta\leq\eta_\star$ is exactly the hypothesis of Theorem~\ref{thm:exact-capacity}, and the strict inequality is exactly the gap used in Proposition~\ref{prop:rigidity}. CPWP follows from the positive Wigner transition kernel in Eq.~\eqref{eq:wignerkernel}; NPT follows from Theorem~\ref{thm:eb-boundary}; the strong converse follows from Theorem~\ref{thm:strong-converse}; and the strict entanglement-assisted inequality follows from Proposition~\ref{prop:ea-capacity}. The local basis protocol attains the unrestricted capacity, so the nested restricted capacities coincide even at the endpoint.
\end{proof}

At $\eta=0$ the capacity formula and rigidity still hold, but the channel is entanglement breaking and $\CE=C$. Theorem~\ref{thm:eta-collision-phase} additionally closes the collision-entropy optimizer problem at and above $\eta_\star$; except for the trivial noiseless endpoint $\eta=1$, it does not determine the von Neumann minimum output entropy or the classical capacity in the interval $\eta_\star<\eta<1$. Figure~\ref{fig:phase-diagram} summarizes both interpolation paths. The Singer construction supplies an infinite scalable sequence for all prime powers $q$, including odd $q$ (for example, $q=3$ gives $D=13$) and dimensions that are not prime powers (for example, $q=4$ gives $D=21$). It is not an existence result for every odd dimension or every pair $(D,r)$.

For the smallest member, $q=2$ gives $(D,r)=(7,3)$. One representative difference set is $R=\{0,1,3\}\subset\mathbb Z_7$. Choosing $\eta=0.4$ yields
\begin{equation}
 C=\log_2(7/3)\approx1.2224\ \text{bits/use},
 \label{eq:D7capacity}
\end{equation}
while the entanglement-assisted capacity and the NPT eigenvalue certificate satisfy
\begin{equation}
 \CE\approx1.7009\ \text{bits/use},
 \qquad
 \lambda_{\min}(J_{R,\eta}^{\Gamma})\leq-0.00668.
 \label{eq:D7extra}
\end{equation}

\subsection{Von Neumann crossover and an open tensor-power conjecture}

For a probability vector $f$, let
\begin{equation}
 H_\alpha^{\rm R}(f):=\frac1{1-\alpha}\log\sum_x f(x)^\alpha
 \quad(\alpha\neq1),
 \qquad H_1^{\rm R}(f):=H(f).
 \label{eq:classical-renyi}
\end{equation}
Because $1<r<D$, there is a unique $\eta_{\rm vN}\in(0,1)$ satisfying
\begin{equation}
 H(h_{\eta_{\rm vN}})=\log r.
 \label{eq:etavn}
\end{equation}
Indeed, $H(h_0)=\log D$, $H(h_1)=0$, and for $0<\eta<1$,
\begin{equation}
 \frac{d}{d\eta}H(h_\eta)
 =-\frac{D-1}{D}\log\!\left(\frac{1+(D-1)\eta}{1-\eta}\right)<0.
 \label{eq:hetaderivative}
\end{equation}
Moreover, $H_2^{\rm R}(h_{\eta_\star})=\log r$ by Eqs.~\eqref{eq:gammaeta} and~\eqref{eq:gammar}. Since $0<\eta_\star<1$, the distribution $h_{\eta_\star}$ has full support and is nonuniform, so strict monotonicity of R\'enyi entropy gives $H(h_{\eta_\star})>H_2^{\rm R}(h_{\eta_\star})$. Hence
\begin{equation}
 \eta_{\rm vN}>\eta_\star.
 \label{eq:thresholdseparation}
\end{equation}
This strict separation rules out an inference of the von Neumann optimizer transition from the proved collision transition.

\begin{conjecture}[Two-basis tensor-power R\'enyi formula]
\label{conj:eta-renyi}
For every nontrivial cyclic difference set $R\subset\Z_D$, every $n\geq1$, every $0\leq\eta\leq1$, and every $1\leq\alpha\leq2$,
\begin{equation}
 S_{\alpha,\min}(\PhiReta^{\otimes n})
 =n\min\!\left\{\log r,H_\alpha^{\rm R}(h_\eta)\right\}.
 \label{eq:eta-renyi-conjecture}
\end{equation}
For $\alpha=1$ this would imply
\begin{equation}
 C(\PhiReta)
 =\max\!\left\{\log(D/r),\log D-H(h_\eta)\right\},
 \label{eq:eta-capacity-conjecture}
\end{equation}
with computational-basis minimizers below $\eta_{\rm vN}$ and Fourier-basis minimizers above it.  At equality both bases attain the same entropy; no assertion about the complete minimizer set at $\eta_{\rm vN}$ is included in the conjecture.
\end{conjecture}

The conjecture is already a theorem at $\alpha=2$ by Theorem~\ref{thm:eta-collision-phase}, throughout $0\leq\eta\leq\eta_\star$ by Theorem~\ref{thm:exact-capacity}, and at the noiseless endpoint $\eta=1$.  Its unresolved content is the von Neumann-to-collision strip $\eta_\star<\eta<1$, $1\leq\alpha<2$, together with tensor stability against arbitrary entangled inputs.

A natural proof target is the maximal output Schatten-norm inequality
\begin{equation}
 \nu_\alpha(\PhiReta^{\otimes n})
 \leq
 \max\!\left\{r^{(1-\alpha)/\alpha},
 \|h_\eta\|_\alpha\right\}^{n},
 \qquad 1<\alpha\leq2,
 \label{eq:desired-pnorm}
\end{equation}
where $\nu_\alpha(\Phi):=\max_\rho\|\Phi(\rho)\|_\alpha$.  Both terms on the right are attained by the corresponding product-basis inputs.  The channel factors into two commuting dephasing operations,
\begin{equation}
 \PhiReta=\mathcal A_R\circ\mathcal D_{Z,\eta}
 =\mathcal D_{Z,\eta}\circ\mathcal A_R,
 \quad
 \mathcal A_R(\rho)=\frac1r\sum_{a\in R}X^a\rho X^{-a},
 \quad
 \mathcal D_{Z,\eta}=\eta\,\operatorname{id}+(1-\eta)\Delta_Z.
 \label{eq:commuting-factorization}
\end{equation}
However, the proof of Theorem~\ref{thm:collision-bound} retains only squared multiplier magnitudes and reduced purities. It therefore proves Eq.~\eqref{eq:desired-pnorm} at $\alpha=2$ but discards the spectral information needed near $\alpha=1$. Interpolation with the trace norm reproduces only the collision lower bound and does not establish Eq.~\eqref{eq:eta-renyi-conjecture}. This is the precise missing step; no later proved result in this paper depends on it. Related work on Pauli-axis and Weyl channels likewise obtains $p=2$ multiplicativity or capacity bounds under additional symmetry or majorization hypotheses, rather than the all-use entropy identity asserted here~\cite{NathansonRuskai2007,Siudzinska2020,Rehman2018}.

As diagnostic evidence for the von Neumann case $\alpha=1$, we performed multi-start pure-state optimization for three small cyclic difference sets. The observed one-use minimum output entropy agreed with $\min\{\log r,H(h_\eta)\}$ at all tested values, and no intermediate optimizer was found. For $D=7$, we also enumerated all $(D+1)(D^2+1)=400$ two-qudit Lagrangian stabilizer subspaces; at tested values spanning both sides of $\eta_{\rm vN}$, no entangled stabilizer input improved on twice the one-use von Neumann candidate. The threshold values used in these checks are listed in Table~\ref{tab:eta-numerics}.
\begin{table}[H]
\centering
\begin{tabular}{@{}cclcc@{}}
\toprule
$D$ & $r$ & representative $R$ & $\eta_\star$ & $\eta_{\rm vN}$ \\
\midrule
7  & 3 & $\{0,1,3\}$       & 0.471405 & 0.671640 \\
11 & 5 & $\{1,3,4,5,9\}$   & 0.346410 & 0.553720 \\
13 & 4 & $\{0,1,3,9\}$     & 0.433013 & 0.665197 \\
\bottomrule
\end{tabular}
\caption{Collision and von Neumann crossover values used in the diagnostic tests. The entries in the last column are the unique roots of Eq.~\eqref{eq:etavn}. Numerical optimization and stabilizer enumeration are supporting evidence only, not a proof of Conjecture~\ref{conj:eta-renyi}.}
\label{tab:eta-numerics}
\end{table}
The two-use enumeration does not cover nonstabilizer entangled states, and finite-dimensional numerical searches cannot establish tensor-power additivity. We therefore retain Eqs.~\eqref{eq:etaLU}--\eqref{eq:eta-capacity-sandwich} as the rigorous capacity statement above $\eta_\star$.

\section{Relation to prior non-entanglement-breaking and additivity results}

The existence of non-entanglement-breaking Weyl channels is not new, and neither are the collision-norm, covariance, or design-theoretic ingredients taken separately. The new operational separation is collected in Theorem~\ref{thm:balanced-family} and Corollary~\ref{cor:singer-balanced}: along a complete one-parameter path from NPT to entanglement breaking, an infinite sparse anisotropic CPWP family has exactly constant unrestricted capacity, a direct finite-block strong converse, a complete all-use optimizer classification, and an entanglement-assisted advantage whose ratio approaches $2-\varepsilon$.  Theorem~\ref{thm:eta-collision-phase} separately resolves the full collision-entropy bifurcation of the identity--dephasing path.

The boundary between prior ingredients and this combined theorem is important. Shor and Laflamme introduced quantum analogues of classical weight enumerators, and Rains developed related enumerators and duality relations~\cite{ShorLaflamme1997,Rains1998}. The quantities $E_T$ and $C_S$ are subsystem-resolved Weyl-weight bookkeeping of the same general type, and Eq.~\eqref{eq:cumulative} is not claimed as a new enumerator identity. The new use is to couple those cumulative constraints to the eigenvalues of a positive Wigner collision operator, so that signed and entangled inputs are controlled by reduced-state purities and the equality conditions remain visible.

The normalized-projector mechanism for R\'enyi orders $0\leq\alpha\leq2$ and its capacity consequences under covariance were developed by Wolf and Eisert~\cite{Wolf2005}. The present work does not claim that abstract mechanism as new. It identifies the uniform shift supports for which a stabilizer projector saturates the Weyl collision certificate and then resolves the equality conditions at every tensor power.

Fukuda and Holevo~\cite{FukudaHolevo2006} derived the Weyl-covariant output-$2$-norm estimate that evaluates to Eq.~\eqref{eq:mainbound} at one use and showed that saturation by a maximal Weyl subgroup yields multiplicativity. Fukuda and Gour~\cite{FukudaGour2017} later obtained the same numerical estimate as the Weyl-diagonal specialization of a general tensor-stable unital-channel bound. Accordingly, neither the numerical inequality nor its abstract multiplicativity mechanism is the novelty claim. The derivation here rewrites the tensor-power estimate in terms of the reduced-purity identity \eqref{eq:cumulative}. Its new role is to make every inequality and equality condition subsystem-resolved, enabling the difference-set minimax characterization and the all-use rigidity statement.

Recent work on quantum convolutional channels has developed stabilizer-dephasing lower bounds for classical communication and exact formulas for stabilizer-diagonal environments and a nonstabilizer qutrit family~\cite{Xiong2026}. That framework transfers spectra from environment measurements. The present result instead starts from a random-Weyl collision spectrum, selects extremal sparse kernels by cyclic design theory, and proves a fully regularized capacity formula together with uniqueness of all minimum-output R\'enyi states in the proved band and all Holevo-optimal block ensembles, under the strict gap, in every Singer dimension $D=q^2+q+1$ with prime-power $q$.

The operational conclusions established here are not contained in the collision-norm estimates alone. In particular, the one-shot bound treats arbitrary codes directly and yields an explicit strong-converse exponent; the strict-gap and two-axis analyses classify every Holevo-optimal block ensemble, including the degenerate balanced endpoint; and the Fourier-resolved principal-minor criterion locates the exact entanglement-breaking boundary of the balanced path. The entanglement-assisted formula itself follows from the standard covariant-channel mutual-information argument~\cite{Bennett1999,Bennett2002}; its role here is to quantify a scalable separation on the same full interval where basis encoding already attains the unrestricted unassisted capacity.

King's depolarizing-channel capacity theorem~\cite{King2003} is a prominent instance where entangled coding is unnecessary. The current result does not claim novelty merely because its channels preserve entanglement: it adds complete Wigner positivity, sparse anisotropy selected by a combinatorial minimax property, an explicit one-shot strong converse, all-use minimum-output and Holevo-ensemble rigidity, and a simultaneous entanglement-assisted separation.

Amosov~\cite{Amosov2020} proved minimum-output-entropy additivity for Weyl channels satisfying an ordered-deformation condition. His convention $W_{jk}=Z^jX^k$ corresponds, up to an irrelevant phase, to the probability array $\pi_{jk}=q_R(k)h_\eta(j)$ used here. The hypothesis orders that array lexicographically as $\pi_{00}\geq\pi_{10}\geq\cdots\geq\pi_{D-1,0}\geq\pi_{01}\geq\cdots$. Every nonzero shift column begins with $h_{\max}/r$, where $h_{\max}=h_\eta(0)$, and ends with $h_{\min}/r$, where $h_{\min}=h_\eta(j\neq0)$. If the nonzero columns do not form an initial consecutive block, a zero precedes a later entry $h_{\max}/r>0$, contradicting monotonicity. If they do form such a block, then $r>1$ places two nonzero columns consecutively and their boundary requires $h_{\min}\geq h_{\max}$, also impossible for $\eta>0$. Thus the Singer family in the stated nontrivial regime does not fall under Amosov's hypothesis.

\section{Conclusion and open questions}

Our main result is an operational separation, not merely another example of a non-entanglement-breaking channel.  The balanced bi-difference-set path keeps the unrestricted capacity exactly equal to $\log(D/r)$ and retains the same finite-block strong converse from one endpoint to the other, although its Choi state is NPT at every $\varepsilon<1$ and the channel becomes entanglement breaking at $\varepsilon=1$.  The decisive ingredient is not positivity of the phase-space kernel by itself, but a flat nonconstant collision spectrum together with the subsystem purity constraints obeyed by every quantum state.

The equality analysis is equally sharp.  At $\varepsilon=0$, all tensor-power minimizers are products whose local factors may independently use either of two mutually unbiased Weyl bases; for every $\varepsilon>0$, only computational-basis products survive.  The Fourier-resolved Choi minor proves the entanglement boundary, while the exact covariant mutual-information formula shows that $C_{\mathrm E}$ falls from $2C$ to $C$.  Singer parameters realize the construction for every prime power $q$, and at fixed $\varepsilon$ the ratio $C_{\mathrm E}/C$ tends to $2-\varepsilon$.  Thus preservation of channel entanglement, usefulness of nonclassical unassisted encodings, and usefulness of pre-shared entanglement are quantitatively distinct.

For the identity--dephasing path, the all-use collision-entropy optimizer problem is closed on the entire interval, including the two-basis threshold and the Fourier-dominated branch above it.  Proposition~\ref{prop:eta-capacity-sandwich} strengthens the operational picture above $\eta_\star$ by combining two explicit product-basis codes with the collision converse.  The candidate von Neumann transition occurs later, at the unique $\eta_{\rm vN}$, and Conjecture~\ref{conj:eta-renyi} states the precise tensor-power inequality that would close the remaining gap.  We emphasize that the numerical tests do not establish this conjecture: the von Neumann minimum-output entropy and unrestricted classical capacity for $\eta_\star<\eta<1$ remain open.  Extending the construction to even dimensions would also require a different phase-space formalism, and proving comparable rigidity beyond R\'enyi order two will require information not captured by purity alone.

\section*{Author contributions}

S.-W. Ji conceived the project, developed the mathematical arguments, and prepared the original manuscript. Generative AI tools assisted with English editing, LaTeX restructuring, algebraic consistency checks, and the schematic figure. The author reviewed and verified all AI-assisted changes and takes full responsibility for the scientific content.

\section*{Acknowledgements}

This research was supported by the National Research Council of Science \& Technology (NST) grant by the Korea government (MSIT) (CAP22053-200).

\section*{Data and code availability}

No experimental data were created or analyzed. The source package includes the editable source of Fig.~\ref{fig:phase-diagram} and the script \texttt{capacity\_conjecture\_numerics.py}, which reproduces the crossover values and the finite-dimensional diagnostic tests reported around Table~\ref{tab:eta-numerics}.

\bibliographystyle{quantum}
\bibliography{references}

@article{Holevo1998,
  author  = {Holevo, Alexander S.},
  title   = {The capacity of the quantum channel with general signal states},
  journal = {IEEE Transactions on Information Theory},
  volume  = {44},
  number  = {1},
  pages   = {269--273},
  year    = {1998},
  doi     = {10.1109/18.651037}
}

@article{Schumacher1997,
  author  = {Schumacher, Benjamin and Westmoreland, Michael D.},
  title   = {Sending classical information via noisy quantum channels},
  journal = {Physical Review A},
  volume  = {56},
  number  = {1},
  pages   = {131--138},
  year    = {1997},
  doi     = {10.1103/PhysRevA.56.131}
}

@article{Hastings2009,
  author  = {Hastings, Matthew B.},
  title   = {Superadditivity of communication capacity using entangled inputs},
  journal = {Nature Physics},
  volume  = {5},
  pages   = {255--257},
  year    = {2009},
  doi     = {10.1038/nphys1224}
}

@article{Shor2002,
  author  = {Shor, Peter W.},
  title   = {Additivity of the classical capacity of entanglement-breaking quantum channels},
  journal = {Journal of Mathematical Physics},
  volume  = {43},
  number  = {9},
  pages   = {4334--4340},
  year    = {2002},
  doi     = {10.1063/1.1498000}
}

@article{Gross2006,
  author  = {Gross, David},
  title   = {Hudson's theorem for finite-dimensional quantum systems},
  journal = {Journal of Mathematical Physics},
  volume  = {47},
  number  = {12},
  pages   = {122107},
  year    = {2006},
  doi     = {10.1063/1.2393152}
}

@article{Wang2019,
  author  = {Wang, Xin and Wilde, Mark M. and Su, Yuan},
  title   = {Quantifying the magic of quantum channels},
  journal = {New Journal of Physics},
  volume  = {21},
  number  = {10},
  pages   = {103002},
  year    = {2019},
  doi     = {10.1088/1367-2630/ab451d}
}

@article{ShorLaflamme1997,
  author  = {Shor, Peter W. and Laflamme, Raymond},
  title   = {Quantum analog of the {MacWilliams} identities for classical coding theory},
  journal = {Physical Review Letters},
  volume  = {78},
  number  = {8},
  pages   = {1600--1602},
  year    = {1997},
  doi     = {10.1103/PhysRevLett.78.1600}
}

@article{Rains1998,
  author  = {Rains, Eric M.},
  title   = {Quantum weight enumerators},
  journal = {IEEE Transactions on Information Theory},
  volume  = {44},
  number  = {4},
  pages   = {1388--1394},
  year    = {1998},
  doi     = {10.1109/18.681316}
}

@misc{FukudaHolevo2006,
  author        = {Fukuda, Motohisa and Holevo, Alexander S.},
  title         = {On {Weyl}-covariant channels},
  year          = {2006},
  eprint        = {quant-ph/0510148},
  archiveprefix = {arXiv},
  primaryclass  = {quant-ph},
  doi           = {10.48550/arXiv.quant-ph/0510148},
  url           = {https://arxiv.org/abs/quant-ph/0510148}
}

@article{FukudaGour2017,
  author  = {Fukuda, Motohisa and Gour, Gilad},
  title   = {Additive bounds of minimum output entropies for unital channels and an exact qubit formula},
  journal = {IEEE Transactions on Information Theory},
  volume  = {63},
  number  = {3},
  pages   = {1818--1828},
  year    = {2017},
  doi     = {10.1109/TIT.2016.2641455}
}

@article{Wolf2005,
  author  = {Wolf, Michael M. and Eisert, Jens},
  title   = {Classical information capacity of a class of quantum channels},
  journal = {New Journal of Physics},
  volume  = {7},
  pages   = {93},
  year    = {2005},
  doi     = {10.1088/1367-2630/7/1/093}
}

@article{KonigWehner2009,
  author  = {K{\"o}nig, Robert and Wehner, Stephanie},
  title   = {A strong converse for classical channel coding using entangled inputs},
  journal = {Physical Review Letters},
  volume  = {103},
  number  = {7},
  pages   = {070504},
  year    = {2009},
  doi     = {10.1103/PhysRevLett.103.070504}
}

@article{Singer1938,
  author  = {Singer, James},
  title   = {A theorem in finite projective geometry and some applications to number theory},
  journal = {Transactions of the American Mathematical Society},
  volume  = {43},
  number  = {3},
  pages   = {377--385},
  year    = {1938},
  doi     = {10.1090/S0002-9947-1938-1501951-4}
}

@book{Beth1999,
  author    = {Beth, Thomas and Jungnickel, Dieter and Lenz, Hanfried},
  title     = {Design Theory},
  edition   = {2},
  volume    = {1},
  publisher = {Cambridge University Press},
  address   = {Cambridge},
  year      = {1999},
  doi       = {10.1017/CBO9780511549533}
}

@article{King2003,
  author  = {King, Christopher},
  title   = {The capacity of the quantum depolarizing channel},
  journal = {IEEE Transactions on Information Theory},
  volume  = {49},
  number  = {1},
  pages   = {221--229},
  year    = {2003},
  doi     = {10.1109/TIT.2002.806153}
}

@article{Amosov2020,
  author  = {Amosov, Grigori G.},
  title   = {On classical capacity of {Weyl} channels},
  journal = {Quantum Information Processing},
  volume  = {19},
  pages   = {401},
  year    = {2020},
  doi     = {10.1007/s11128-020-02900-5}
}

@article{NathansonRuskai2007,
  author  = {Nathanson, Michael and Ruskai, Mary Beth},
  title   = {Pauli diagonal channels constant on axes},
  journal = {Journal of Physics A: Mathematical and Theoretical},
  volume  = {40},
  number  = {28},
  pages   = {8171--8204},
  year    = {2007},
  doi     = {10.1088/1751-8113/40/28/S22}
}

@article{Siudzinska2020,
  author  = {Siudzi{\'n}ska, Katarzyna},
  title   = {Classical capacity of the generalized {Pauli} channels},
  journal = {Journal of Physics A: Mathematical and Theoretical},
  volume  = {53},
  number  = {44},
  pages   = {445301},
  year    = {2020},
  doi     = {10.1088/1751-8121/abb276}
}

@article{Rehman2018,
  author  = {ur Rehman, Junaid and Jeong, Youngmin and Kim, Jeong San and Shin, Hyundong},
  title   = {Holevo capacity of discrete {Weyl} channels},
  journal = {Scientific Reports},
  volume  = {8},
  pages   = {17457},
  year    = {2018},
  doi     = {10.1038/s41598-018-35777-7}
}

@article{Peres1996,
  author  = {Peres, Asher},
  title   = {Separability criterion for density matrices},
  journal = {Physical Review Letters},
  volume  = {77},
  number  = {8},
  pages   = {1413--1415},
  year    = {1996},
  doi     = {10.1103/PhysRevLett.77.1413}
}

@article{Horodecki1996,
  author  = {Horodecki, Micha{\l} and Horodecki, Pawe{\l} and Horodecki, Ryszard},
  title   = {Separability of mixed states: necessary and sufficient conditions},
  journal = {Physics Letters A},
  volume  = {223},
  number  = {1--2},
  pages   = {1--8},
  year    = {1996},
  doi     = {10.1016/S0375-9601(96)00706-2}
}

@article{Bennett1999,
  author  = {Bennett, Charles H. and Shor, Peter W. and Smolin, John A. and Thapliyal, Ashish V.},
  title   = {Entanglement-assisted classical capacity of noisy quantum channels},
  journal = {Physical Review Letters},
  volume  = {83},
  number  = {15},
  pages   = {3081--3084},
  year    = {1999},
  doi     = {10.1103/PhysRevLett.83.3081}
}

@article{Bennett2002,
  author  = {Bennett, Charles H. and Shor, Peter W. and Smolin, John A. and Thapliyal, Ashish V.},
  title   = {Entanglement-assisted capacity of a quantum channel and the reverse {Shannon} theorem},
  journal = {IEEE Transactions on Information Theory},
  volume  = {48},
  number  = {10},
  pages   = {2637--2655},
  year    = {2002},
  doi     = {10.1109/TIT.2002.802612}
}

@article{HorodeckiShorRuskai2003,
  author  = {Horodecki, Micha{\l} and Shor, Peter W. and Ruskai, Mary Beth},
  title   = {General entanglement breaking channels},
  journal = {Reviews in Mathematical Physics},
  volume  = {15},
  number  = {6},
  pages   = {629--641},
  year    = {2003},
  doi     = {10.1142/S0129055X03001709}
}

@misc{Xiong2026,
  author        = {Xiong, Chunhe and Kim, Sunho and Zhang, Qing-hua and Fei, Shao-ming},
  title         = {Mean-state entropy hierarchies and classical communication through quantum convolutions},
  year          = {2026},
  eprint        = {2607.16653},
  archiveprefix = {arXiv},
  primaryclass  = {quant-ph},
  doi           = {10.48550/arXiv.2607.16653},
  url           = {https://arxiv.org/abs/2607.16653}
}

\end{document}